\documentclass[reqno,a4paper,10pt]{amsart}
\pagestyle{plain}
\usepackage[a4paper, centering]{geometry}
\usepackage[utf8]{inputenc} 
\usepackage[english]{babel}
\usepackage[T1]{fontenc}
\usepackage{lmodern}
\usepackage{textcomp} 
\usepackage{caption}
\usepackage{subcaption}
\usepackage{enumerate}
\usepackage{amssymb, amsmath}
\usepackage{mathrsfs,dsfont}
\usepackage{amscd}
\usepackage{bbm}
\usepackage{amsthm}
\usepackage{cite}
\usepackage{esint }
\usepackage[mathcal]{euscript}
\usepackage[active]{srcltx}
\usepackage{verbatim}
\usepackage[colorlinks,linkcolor={blue},citecolor={blue},urlcolor={black}]{hyperref}
\usepackage[]{changebar}
\usepackage{xcolor}
\usepackage{mathtools}
\usepackage{url}            
\usepackage{booktabs}       
\usepackage{amsfonts}       
\usepackage{nicefrac}       
\usepackage{microtype}      
\usepackage{lipsum}
\usepackage{fancyhdr}       
\usepackage{graphicx}       
\graphicspath{{media/}}     

\usepackage{bookmark} 
\usepackage[a4paper, centering]{geometry}
\usepackage{epstopdf}
\usepackage{amscd}
\usepackage{color}
\usepackage{graphicx}

\usepackage{yfonts}
\usepackage{fullpage}
\usepackage{comment}
\usepackage[braket,qm]{qcircuit} 
\usepackage{ stmaryrd } 

\usepackage[many]{tcolorbox}
\newtcolorbox{mybox}{enhanced,colback=red!5!white, colframe=red!75!black, width=\textwidth,box align=center,halign=center,valign=center, center}
\usepackage[new]{old-arrows} 
\usepackage{mathabx} 

\usepackage{tabularx}
\usepackage{multirow}
\usepackage{makecell}

\newtheorem{thm}{Theorem}[section]

\newtheorem*{thm*}{Theorem}
\newtheorem{cor}{Corollary}[section]

\newtheorem{lem}{Lemma}[section]

\newtheorem{prop}{Proposition}[section]

\newtheorem*{prop*}{Proposition}
\newtheorem{ass}{A}

\theoremstyle{definition}
\newtheorem{defn}{Definition}[section]

\theoremstyle{remark}

\numberwithin{equation}{section}


\def\N{{\mathbb N}}

\def\R{{\mathbb R}}

\def\W{{\mathcal W}}

\def\norma #1{\left\lVert #1 \right\rVert}

\def\P{{\mathscr{P}}}

\def\E{{\mathbb E}}

\def\H{{\mathscr H}}

\def\M{{\mathcal M}}
\def\L{{\mathcal L}}
\def\T{{\mathbb T}}

\def\X{{\mathscr X}}

\def\O{{\mathcal O}}

\newcommand{\hol}[1]{\"{#1}}
\def\D{{\mathcal D}}


\def\N{{\mathbb N}}

\def\R{{\mathbb R}}
\def\CC{{\mathbb C}}
\def\L{{\mathcal L}}

\def\H{{\mathcal H}}
\def\O{{\mathcal O}}
\def\E{{\mathbb E}}

\def\X{{\mathcal X}}

\def\D{{\mathcal D}}

\def\norma #1{\left\lVert #1 \right\rVert}

\def\P{{\mathbb{P}}}

\def\de{{\rm d}}

\def\ov #1{\overline{#1}}



\definecolor{viola}{rgb}{0.3,0,0.7}
\definecolor{ciclamino}{rgb}{0.5,0,0.5}
\definecolor{rosso}{rgb}{0.8,0,0}


\newcommand{\beq}{\begin{equation}}
\newcommand{\eeq}{\end{equation}}
\newcommand{\bal}{\begin{aligned}}
\newcommand{\eal}{\end{aligned}}
\newcommand{\ben}{\begin{enumerate}}
\newcommand{\beni} {\begin{enumerate}[(i)]}
\newcommand{\een}{\end{enumerate}}
\newcommand{\bit}{\begin{itemize}}
\newcommand{\eit}{\end{itemize}}
\newcommand{\beqw}{\begin{equation*}}
\newcommand{\eeqw}{\end{equation*}}
\newcommand{\bex}{\begin{example}}
\newcommand{\eex}{\end{example}}
\newcommand{\bre}{\begin{example}}
\newcommand{\ere}{\end{example}}
\newcommand{\bma}{\begin{bmatrix}}
\newcommand{\ema}{\end{bmatrix}}



\makeatletter
\@namedef{subjclassname@2020}{%
  \textup{2020} Mathematics Subject Classification}
\makeatother

\title[MFG]{Mean-field limit from general mixtures of experts to quantum neural networks}
\author[A.~Melchor Hernandez]{Anderson Melchor Hernandez}
\address[A.~Melchor Hernandez]{Dipartimento di Matematica, Via Zamboni, 33, 40126, Bologna (Italy)}

\author[D.~Pastorello]{Davide Pastorello}
\address[D.~Pastorello]{Dipartimento di Matematica,
Università di Bologna, Via Zamboni, 33, 40126, Bologna (Italy)}
\email{davide.pastorello3@unibo.it}

\author[G.~De Palma]{Giacomo De Palma}
\address[G.~De Palma]{Dipartimento di Matematica,
Università di Bologna, Via Zamboni, 33, 40126, Bologna (Italy)}
\email{giacomodepalma@unibo.it}

\email{anderson.melchor@unibo.it}
\date{\today}
\keywords{mean field limit, Wasserstein distance, mixture of experts}

\begin{document}
\subjclass[2020]{81P45, 49Q22, 60F05}

\begin{abstract}
In this work, we study the asymptotic behavior of Mixture of Experts (MoE) trained via gradient flow on supervised learning problems. Our main result establishes the propagation of chaos for a MoE as the number of experts diverges. We demonstrate that the corresponding empirical measure of their parameters is close to a probability measure that solves a nonlinear continuity equation, and we provide an explicit convergence rate that depends solely on the number of experts. We apply our results to a MoE generated by a quantum neural network.
\end{abstract}

\maketitle

\tableofcontents

\section{Introduction}

In recent years, there has been a growing interest in leveraging Artificial Intelligence methods for the analysis of large datasets \cite{berlyand2023,russell2016}. The umbrella term ``Artificial Intelligence'' encompasses numerous subfields, ranging from learning theory to the mathematical foundations of its development. At the heart of AI and machine learning is the detection of intricate patterns within massive amounts of data, an ability that enables systems to discover hidden relationships, generate insightful predictions, and draw conclusions that might otherwise elude human observation \cite{bishop2006}. 
Among newly emerging AI disciplines, Quantum Machine Learning stands out for uniting classical machine learning techniques with quantum computing \cite{de2019primer,schuld2015,pastorello2023concise}. The key principle behind quantum machine learning is to exploit quantum algorithms and quantum-mechanical effects---such as superposition, entanglement, and quantum parallelism---to improve the performance of deep neural models \cite{biamonte2017}. One of the most prominent QML algorithms is constituted by quantum neural networks, which constitute the quantum analog of deep neural networks. The output of a quantum neural network is the expectation value of a quantum observable measured on the state generated by a parametric quantum circuit. Such circuit is made by parametrized one-qubit and two-qubit gates \cite{girardi2024,schuld2018}, whose parameters encode both the input data and the trainable components of the model. Typically, these parameters are optimized via gradient-based methods to minimize a cost function and boost the ability of the circuit to process and analyze data \cite{schuld2021effect}.

In this paper, we investigate a general parametric model which may be implemented by a quantum neural network. Consider a finite set $\mathcal{X}$ of possible inputs (\emph{e.g.}, images embedded in $\mathbb{R}^d$) and let $\Theta$ represent the vector of circuit parameters. A parametric model defines a function $x \mapsto f(\Theta, x)$. Suppose we have a training set $\{(x^{(i)},\,y^{(i)}) : i=1,\dots,n\}$, where $x^{(i)}\in\mathcal{X}$ (\emph{e.g.}, dog or cat images) and $y^{(i)}\in\mathbb{R}$ (\emph{e.g.}, $y^{(i)}=1$ for a dog, $y^{(i)}=-1$ for a cat). The objective of supervised learning is to find parameters $\Theta$ such that $f(\Theta, x)$ closely matches the labels $y^{(i)}$. A common approach is to minimize the empirical quadratic loss
\begin{equation}
\mathcal{L}(\Theta) = \frac{1}{2}\sum_{i=1}^n \left(f(\Theta,x^{(i)}) - y^{(i)}\right)^2
\end{equation}
using gradient-based optimization. For simplicity, we analyze the continuous-time gradient flow instead of discrete gradient descent.

Recent work has sought to explore quantum neural networks in light of their potential to couple quantum computational power with the representational efficiency of deep learning \cite{lloyd2020quantum}. Indeed, a remarkable quantum speed-up can be achieved by a non-linear encoding of the data into a quantum feature space and using a linear classifier in a high-dimensional Hilbert space \cite{havlivcek2019,liu2021}. Despite such promising results, open questions remain regarding how to identify and optimize circuit parameters effectively \cite{cinelli2021var}.

Ref. \cite{girardi2024} considers quantum neural networks trained on supervised learning tasks, where the objective function is defined as the expected value of the sum of single-qubit observables across all qubits normalized such that its variance at initialization is $\Theta(1)$. The authors characterize the training dynamics of such quantum neural networks and prove their trainability in the limit of infinite width in any regime where the depth is allowed to grow with the number of qubits (denoted by $m$), as long as barren plateaus do not arise.
More precisely, Ref. \cite{girardi2024} proves that the probability distribution of the function generated by the trained network converges in distribution to a Gaussian process whose mean and covariance can be computed analytically \cite[Theorem 4.15]{girardi2024}.
Ref. \cite{hernandez2024} has subsequently provided quantitative bounds to such convergence in terms of the Wasserstein distance of order $1$.

In this paper, we adopt an alternative approach to the training of quantum neural networks: the {\em mean-field limit}, which has been extensively studied in the setting of classical deep neural networks \cite{sirignano2021meanfieldanalysisdeep,nguyen2019meanfieldlimitlearning,nguyen2023rigorous,mei2019meanfieldtheorytwolayersneural,lu2020meanfieldanalysisdeepresnet}.
Such a limit approximates the empirical distribution of the neurons of the network with a smooth probability distribution. In this viewpoint, each neuron can be viewed as a particle that evolves under a suitable gradient flow in parameter space, and as the width tends to infinity, one can derive a limiting partial differential equation description of the training dynamics.
Specifically, consider a feedforward neural network with multiple layers. When the number of neurons per layer grows to infinity (while scaling the weights and biases appropriately), the empirical distribution of the neurons in each layer converges to a smooth measure. The function generated by the network then becomes governed by this measure, making it possible to track the evolution of the measure itself via mean-field gradient-flow equations. This approach provides a rigorous mathematical framework for understanding why very large networks can often escape poor local minima, achieve small training error, and even generalize well \cite{mei2019meanfieldtheorytwolayersneural,Rotskoff_2022,sirignano2021meanfieldanalysisdeep,araújo2019meanfieldlimitcertaindeep}.

Here, we explore whether a similar mean-field viewpoint can be adopted for quantum neural networks. We begin by adopting a more general point of view, considering a model function called {\em mixture of experts} (MoE)  given by the average of simpler identical parametric models \cite{eigen2013learning,jacobs1991adaptive,rasmussen2001infinite}, and we  study its behavior as the number of experts increases.
In classical machine learning, mixture of experts have been extensively used in several contexts,
including transformer-based large language models (LLMs) \cite{cai2024surveymixtureexperts,eigen2013learning,jacobs1991adaptive,rasmussen2001infinite}. 
We consider a uniform mixture of $N$ identical experts: 
\begin{equation}
    F(\Theta,x)\coloneqq\frac{1}{N}\sum_{i=1}^{N}f(\theta^{i},x),
\end{equation}
where $x$ is the input, $\theta^i$ are the parameters of the $i$-th expert, and $f$ is the model function of a single expert.

In the quantum case, each expert is a parametric quantum circuit with model function $f$ defined in \eqref{model1}. Then, $F$ turns out to be a hybrid model given by a classical mixture of quantum experts, which is hard to simulate classically if a significant fraction of the experts is hard to simulate.

In this spirit, we analyze the mean field phenomenon through the so-called propagation of chaos \cite{sznitman1991topics}. The propagation of chaos is a phenomenon observed in large systems of interacting particles, such as gases, where individual particles behave almost independently as the number of particles becomes infinitely large. This principle is central in statistical mechanics and kinetic theory, providing a bridge between microscopic dynamics and macroscopic laws \cite{sznitman1991topics}. We then apply this approach to analyze the collective behavior of a MoE, where the parameters $\Theta^{N}\coloneqq(\theta^{1},\ldots \theta^{N})$ are trained by gradient flow, resulting in an updated collection of parameter $\Theta_{t}^{N}\coloneqq(\theta_{t}^{1},\ldots, \theta_{t}^{N})$ that define a mixture of experts.
This, in turn, yields an updated model function $F(\Theta_{t},x)$.
The main idea of propagation of chaos is to compare the dynamics of $\Theta_{t}$, with the dynamics of a family of parameters $(\ov{\theta_{t}}^{1},\ldots, \ov{\theta_{t}}^{N})$ that are all independent, and then to study the proximity between their probability laws. In order to capture this proximity, we use the so-called Wasserstein distance \cite{kantorovich1960mathematical,MR2459454}. The Wasserstein distance of order $1\leq p<+\infty$, denoted as $\W_{p}$, is a metric employed to measure the distance between probability distributions on a metric space, capturing the geometry of the space in which they reside \cite{MR2459454}. Specifically, such distance is given by the $p$-th root of the minimum cost of transporting mass from one distribution to another, where the cost is determined by the $p$-th power of the distance between points of the ambient space \cite{kantorovich1960mathematical}.

\subsection{Our results}
In this paper, we prove the propagation of chaos for a sequence of $N\in\N$ experts whose dynamics follow the gradient flow equation associated with the minimization of the quadratic cost of a supervised learning problem. We show that, at each fixed time $t>0$, the empirical measure associated with these experts converges to a probability measure solving a nonlinear continuity equation. Without delving into all the details, we establish the following result (see \autoref{ourmainthm}, and \autoref{mfieldq} for a formal statement).

\begin{thm}\label{thm:informal}
Consider the MoE $F(\Theta,x)$ induced by the set of $N$ identical experts $\{f(\theta^{i},x):i=1,\ldots,N\}$ where $x$ represents a generic input, $\Theta\coloneqq(\theta^{1},\ldots \theta^{N})$ is the vector of parameters supported on the Torus $\mathbb{T}^{d}$ of dimension $d\in\N$ with period $2\pi$, and $f$ is a generic expert satisfying suitable regularity assumption on the variable $\theta$. Let each component of $\Theta$ be initialized by sampling it from the uniform distribution, and let then $\Theta$ be trained via gradient flow:
\begin{align}
\begin{aligned}
&\frac{\de \L(\Theta_{t}^{N})}{\de t}=-N\nabla_{\Theta}\L(\Theta_{t}^{N}),\\
&\L(\Theta_{t}^{N})\coloneqq \frac{1}{2}\sum_{i=1}^n \left(F(\Theta_{t}^{N},x^{(i)}) - y^{(i)}\right)^2.
\end{aligned}
\end{align}
Then, there exists a positive constant $C>0$ independent of $N$, and depending on $t$, such that

\begin{align}\label{bound1}
   \E\W_{2}(\mu_{\Theta_{t}^{N}},\mu_{t})\leq C\left(N^{-\frac{2}{d}}+N^{-\frac{1}{2}}\right),
\end{align}
where $\mu_{t}$ is the unique solution to the following nonlinear continuity equation:

\begin{align}
&\frac{\de \mu_{t}(\theta)}{\de t}=-\nabla_{\theta}\cdot\left(b(\theta,\mu_{t})\mu_{t}\right),\\
&b(\theta,\mu_{t})\coloneqq\sum_{j=1}^{n}\nabla_{\theta}f(\theta,x_{j})\left(y_{j}-\E_{\ov{\alpha}\sim \mu_{t}}\left[f(\ov{\alpha},x_{j})\right]\right).
\end{align}
Here, $\W_{2}$ denotes the Wasserstein distance of order $2$, and $\mu_{\Theta_{t}^{N}}$ is the empirical measure of the vector $\Theta_{t}^{N}$.
\end{thm}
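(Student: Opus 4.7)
The plan is to execute a classical Sznitman-type coupling argument, specialized to the deterministic gradient flow. First, I would rewrite the gradient flow in single-particle form: a direct computation gives
\[
-N\,\nabla_{\theta^{i}}\L(\Theta_{t}^{N})=\sum_{j=1}^{n}\nabla_{\theta}f(\theta_{t}^{i},x_{j})\bigl(y_{j}-F(\Theta_{t}^{N},x_{j})\bigr)=b(\theta_{t}^{i},\mu_{\Theta_{t}^{N}}),
\]
so each $\theta_{t}^{i}$ obeys the interacting ODE $\dot{\theta}_{t}^{i}=b(\theta_{t}^{i},\mu_{\Theta_{t}^{N}})$. Alongside, I introduce the ``nonlinear companions'' $\bar\theta_{t}^{i}$ with the \emph{same} initial conditions $\bar\theta_{0}^{i}=\theta_{0}^{i}$ (i.i.d.\ uniform on $\T^{d}$) but driven by the \emph{limit} drift $\dot{\bar\theta}_{t}^{i}=b(\bar\theta_{t}^{i},\mu_{t})$, where $\mu_{t}$ is determined self-consistently by $\mu_{t}=\mathrm{Law}(\bar\theta_{t}^{1})$. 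Since the drift depends only on a deterministic curve of measures, the $\bar\theta_{t}^{i}$ are i.i.d.\ with common law $\mu_{t}$, and a pushforward argument identifies $\mu_{t}$ as the unique solution of the announced continuity equation.

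Next I would establish the Lipschitz ingredients. The regularity of $f$ in $\theta$ on the compact torus (boundedness and Lipschitz continuity of both $f$ and $\nabla_{\theta}f$, uniformly in $x$), together with the Kantorovich--Rubinstein identity $|\E_{\alpha\sim\nu}[f(\alpha,x_{j})]-\E_{\alpha\sim\nu'}[f(\alpha,x_{j})]|\leq\|f(\cdot,x_{j})\|_{\mathrm{Lip}}\,\W_{1}(\nu,\nu')$, yields a two-sided Lipschitz estimate
\[
|b(\theta,\nu)-b(\theta',\nu')|\leq L\bigl(|\theta-\theta'|+\W_{1}(\nu,\nu')\bigr)
\]
for an explicit $L$ depending on $n$, $\max_{j}|y_{j}|$, and the regularity constants of $f$. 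A Picard iteration in $C([0,T];\mathscr{P}(\T^{d}))$ with the uniform-in-time Wasserstein metric then produces and characterizes $\mu_{t}$, closing the well-posedness of the limit.

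The coupling step is then standard. Subtracting the two ODEs and squaring gives
\[
\frac{\de}{\de t}|\theta_{t}^{i}-\bar\theta_{t}^{i}|^{2}\leq 2L\bigl(|\theta_{t}^{i}-\bar\theta_{t}^{i}|^{2}+|\theta_{t}^{i}-\bar\theta_{t}^{i}|\,\W_{2}(\mu_{\Theta_{t}^{N}},\mu_{t})\bigr).
\]
Averaging over $i$, applying Cauchy--Schwarz, and using both the identity-coupling bound $\W_{2}^{2}(\mu_{\Theta_{t}^{N}},\mu_{\bar\Theta_{t}^{N}})\leq\frac{1}{N}\sum_{i}|\theta_{t}^{i}-\bar\theta_{t}^{i}|^{2}$ and the triangle inequality $\W_{2}(\mu_{\Theta_{t}^{N}},\mu_{t})\leq\W_{2}(\mu_{\Theta_{t}^{N}},\mu_{\bar\Theta_{t}^{N}})+\W_{2}(\mu_{\bar\Theta_{t}^{N}},\mu_{t})$, a Gr\"onwall argument bounds $\E\W_{2}^{2}(\mu_{\Theta_{t}^{N}},\mu_{\bar\Theta_{t}^{N}})$ by $C(t)\,\int_{0}^{t}\E\W_{2}^{2}(\mu_{\bar\Theta_{s}^{N}},\mu_{s})\,\de s$. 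Since $\bar\theta_{t}^{1},\dots,\bar\theta_{t}^{N}$ are i.i.d.\ samples of $\mu_{t}$ on the compact torus, Fournier--Guillin-type empirical-measure estimates yield the quantitative control $\E\W_{2}^{2}(\mu_{\bar\Theta_{s}^{N}},\mu_{s})\lesssim N^{-2/d}+N^{-1/2}$, which combined with the triangle inequality produces the announced bound \eqref{bound1}.

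The main obstacle I expect is keeping the constant $C(t)$ independent of $N$ and explicit in $t$ and the model parameters: the Gr\"onwall step yields an exponential factor $e^{cLt}$, but one must also verify that $L$ remains uniform in time throughout the training dynamics, which requires a priori control (at least uniform boundedness and non-concentration) of the evolving measure $\mu_{t}$ and of the trajectory $\Theta_{t}^{N}$. A secondary subtlety is the use of the geodesic distance on $\T^{d}$ rather than the Euclidean one, which must be respected both in the pointwise Lipschitz estimates for $b$ and in the Fournier--Guillin bound transferred to the torus, so that all constants remain independent of $N$ and dimension-explicit.
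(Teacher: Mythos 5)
Your proposal is correct and follows essentially the same route as the paper: single-particle rewriting of the gradient flow with drift $b(\theta,\mu)$, a two-sided Lipschitz estimate for $b$ via Kantorovich--Rubinstein and the uniform bounds on $f$ and its derivatives, a Sznitman coupling with i.i.d.\ McKean particles sharing the same initial data, Gr\"onwall, the Fournier--Guillin empirical-measure rate $N^{-2/d}+N^{-1/2}$ (which is why the paper restricts to $d>4$), and the final triangle inequality. Your worry about the Lipschitz constant staying uniform in time is moot here, since the assumed uniform bounds on $f$, $\nabla_\theta f$, and the Hessian make $L$ independent of $t$ and of the evolving measure.
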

Our bound in \eqref{bound1} explicitly depends on the dimension $d$. Notice that the right-hand side tends to zero as $N\rightarrow+\infty$.
Let us also note that our result does not hold in the limit $t\to\infty$, as our bound diverges for $t\to\infty$. To the best of our knowledge, the validity of the mean-field limit for infinite training time remains an open question even for classical deep neural networks \cite{araújo2019meanfieldlimitcertaindeep,sirignano2021meanfieldanalysisdeep,Rotskoff_2022}.

We then apply \autoref{thm:informal} to the setting where each expert is a parametric quantum circuit.
In the previous work \cite{hernandez2024}, the considered quantum neural network is a single parametric circuit. A proof of convergence of the probability distribution of the generated function to a Gaussian process is provided in the limit of infinite width, that is, in the limit of infinitely many qubits. In the present work, we consider a uniform mixture of experts givevn by fixed parametric quantum circuits, and we provide the proof of convergence of $\mu_{\Theta_{t}^{N}}$ to $\mu_t$ in the limit of infinitely many experts.   
Differently from the previous works \cite{girardi2024,hernandez2024}, where the asymptotic behavior of the quantum neural network is studied in the regime where the variance of the generated function at initialization is $\Theta(1)$, we study the regime where the generated function is uniformly bounded.
Such regime does not exhibit the lazy training of \cite{girardi2024,hernandez2024} that hinders representation learning.

This work is organized as follows. In \autoref{sec:prelim}, we set the notation of the paper, and we review the concepts of Wasserstein distance and propagation of chaos.
In \autoref{sec:generalmean}, we provide a general statement ensuring the existence of solutions for McKean-type equations where the drift term is determined by a general neural function satisfying appropriate growth conditions. Furthermore, in this section, we prove our main result, \autoref{ourmainthm}.  In \autoref{sub:quantum}, we apply our results to quantum neural networks. Finally, in \autoref{sec:concl}, we present some concluding remarks and discuss potential directions for future research.

\section{Preliminaries and notation}\label{sec:prelim}

Let us start by introducing the notation of the present work.
\subsection{Training data}\label{sub:trdata}
Let $\X$ be the feature space, \emph{i.e.}, the set of all the possible inputs, and we let $\mathbb{R}$ be the output space.
Let
\begin{equation}
    \D \coloneqq\left\{(x^{(i)},y^{(i)}):i=1,\ldots,n\right\}\subset \X\times\mathbb{R}
\end{equation}
be the training set. We set $n=\vert \D\vert$ to be the cardinality of $\D$.

\subsection{The Wasserstein distance of order 2}
In this subsection, we introduce the distance that we employ to quantify the closeness between the empirical measures associated to a vector of trained parameters and its weak limit. The Wasserstein distance of order $1\leq p<+\infty$ is a metric used to measure the distance between two probability distributions on a metric space, capturing not only differences in their values but also the geometry of the space in which they reside \cite{MR2459454,kantorovich1960mathematical}. Specifically, it is given by the $p$-th root of the minimum cost of transporting mass from one distribution to another, where the cost is determined by the $p$-th power of the distance between points of the ambient space. It admits applications in several areas of mathematics \cite{MR2459454} and in the general scenario of transporting resources in the cheapest way \cite{evans2012phylogenetic, rachev2011probability}. As a distance among probability distributions, the Wasserstein distance finds natural applications in statistics \cite{santambrogio2015optimal,panaretos2020invitation} and machine learning \cite{frogner2015learning, MAL-073, cheng2020wasserstein, arjovsky2017wasserstein} and it is also be extended to the quantum realm and considered in the context of quantum machine learning \cite{de2021quantum,kiani2022}.
Let $(\mathbb{T}^{d},\norma{\cdot}_{1})$ be the $d$-dimensional Torus with period $2\pi$ endowed with the $1$-norm $\norma{\cdot}_{1}$. In what follows, we set $\M(\mathbb{T}^{d})$ as the set of all probability measures over $\mathbb{T}^{d}$, and we endow it with $p$-Wasserstein distance defined as

\begin{align}
\W_{p}^{p}(\mu,\nu)\coloneqq\inf_{\pi\in \Gamma(\mu,\nu)}\displaystyle\int_{X\times X}(d_{X}(x_{1},x_{2}))^{p}\de \pi(x_{1},x_{2}),   
\end{align}
where $\Gamma(\mu,\nu)$ denotes the set of all possible joint probability measures having marginals $\mu$ and $\nu$. In what follows, we focus on $p=2$ as it relates to the concept of propagation of chaos, which will be discussed later.
\subsection{Particle systems and propagation of chaos}
In this section, we recall the notion of propagation of chaos as introduced by McKean and later developed by Sznitman to study the asymptotic behavior of large systems of particles \cite{sznitman1991topics,graham1992mckean}. Propagation of chaos refers to a phenomenon in systems of interacting particles where, as the number of particles grows, the behavior of any finite subset of particles becomes increasingly independent and identically distributed. This concept is particularly significant in the context of mean-field interactions, offering a rigorous connection between microscopic dynamics and macroscopic statistical behavior. It demonstrates how the collective evolution of a large system can often be effectively approximated by a limiting equation, such as the Vlasov or McKean-Vlasov equation, which governs the distribution of a single particle \cite{graham1996asymptotic}.

Let be given $N\in\N$, and suppose that 
 
\begin{align}\label{campi1}
b:\T^{d}\times \M(\T^{d})\rightarrow \T^{d},
\end{align}
is globally Lipschitz as in \eqref{Lipschitz}. Consider the $N$-particle system $\Theta_{t}^{N}\coloneqq (\theta_{t}^{1},\ldots \theta_{t}^{N})$ whose evolution is given by

\begin{align}\label{particleN}
\de \theta_{t}^{i}=b(\theta_{t}^{i},\mu_{\Theta_{t}^{N}})\de t, \hskip 0,2cm i\in\{1,\ldots,N\},
\end{align}
where

\begin{align}
\mu_{\Theta_{t}^{N}}\coloneqq \frac{1}{N}\sum_{i=1}^{N}\delta_{\theta_{t}^{i}}.
\end{align}
In what follows, we will present the seminal result stating that  as $N\rightarrow +\infty$, $\mu_{\Theta_{t}^{N}}$ converges to the unique solution $\mu_t$ of the nonlinear continuity equation

\begin{align}\label{FKP}
\frac{\de \mu_{t}(\theta)}{\de t}=-\nabla_{\theta}\cdot\left(b(\theta,\mu_{t})\mu_{t}\right),
\end{align}
and $\mu_{t}$ is the probability law of the so-called McKean process $(\ov{\theta}_{t})_{t\geq 0}$ which solves the following nonlinear differential equation 

\begin{align}\label{MKV}
\de \ov{\theta}_{t}=b(\ov{\theta}_{t},\mu_{t})\de t,
\end{align}
where $\mu_{t}={\mathrm Law}(\ov{\theta}_{t})$.

\begin{thm}\label{wellpod}
Suppose that $b$ is globally Lipschitz: there exists $C>0$ such that for all $x,y\in \T^{d}$ and for all $\mu,\nu\in \M(\T^{d})$ it holds that:

\begin{align}\label{Lipschitz}
\norma{b(x,\mu)-b(y,\nu)}_{1}\leq C\left(\norma{x-y}_{1}+ \W_{2}(\mu,\nu)\right).
\end{align}
Then for any $T>0$  the stochastic differential equation \eqref{MKV} has a unique strong solution on $[0,T]$, and consequently, its law is the unique weak solution to the continuity equation \eqref{FKP}.
\end{thm}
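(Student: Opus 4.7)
The plan is a classical Picard fixed-point argument performed directly on curves of probability measures equipped with an exponentially weighted Wasserstein metric, followed by a chain-rule derivation of the continuity equation. Note that \eqref{MKV} is an ODE with a random initial condition (there is no Brownian term), so existence and uniqueness of a strong solution can be read off from the existence of a fixed point at the level of the laws.

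Fix $T>0$ and an initial law $\mu_0\in\M(\T^d)$. On the complete space $\cE_T:=C([0,T];(\M(\T^d),\W_2))$ of curves starting at $\mu_0$, endow it with the metric
\begin{equation*}
D_T(\mu,\nu):=\sup_{t\in[0,T]} e^{-\lambda t}\,\W_2(\mu_t,\nu_t),
\end{equation*}
where $\lambda>0$ will be chosen later. For each fixed $\mu\in\cE_T$, the time-dependent vector field $(t,\theta)\mapsto b(\theta,\mu_t)$ is Lipschitz in $\theta$ uniformly in $t$ by \eqref{Lipschitz}, so standard Cauchy--Lipschitz theory yields a unique pathwise solution $\theta^{\mu}_t$ to $\de \theta^{\mu}_t=b(\theta^{\mu}_t,\mu_t)\,\de t$ with $\theta^{\mu}_0=\ov{\theta}_0\sim\mu_0$. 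Define $\Psi(\mu)_t:=\mathrm{Law}(\theta^{\mu}_t)$; a fixed point of $\Psi$ provides a solution of \eqref{MKV}.

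The central step is the contraction estimate. Coupling $\theta^{\mu}$ and $\theta^{\nu}$ through the same $\ov{\theta}_0$ and using \eqref{Lipschitz} gives
\begin{equation*}
\norma{\theta^{\mu}_t-\theta^{\nu}_t}_{1}\leq C\int_0^t\left(\norma{\theta^{\mu}_s-\theta^{\nu}_s}_{1}+\W_2(\mu_s,\nu_s)\right)\de s.
\end{equation*}
Squaring, using Cauchy--Schwarz in the time integral, taking expectation and applying Gr\"onwall's inequality yields a constant $K=K(C,T)$ such that
\begin{equation*}
\W_2^2(\Psi(\mu)_t,\Psi(\nu)_t)\;\leq\;\E\,\norma{\theta^{\mu}_t-\theta^{\nu}_t}_{1}^{2}\;\leq\;K\int_0^t \W_2^2(\mu_s,\nu_s)\,\de s,
\end{equation*}
where the first inequality uses that $(\theta^{\mu}_t,\theta^{\nu}_t)$ is an admissible coupling. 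Multiplying by $e^{-2\lambda t}$ and estimating the resulting integral gives $D_T(\Psi(\mu),\Psi(\nu))\leq \sqrt{K/(2\lambda)}\,D_T(\mu,\nu)$, so $\Psi$ is a strict contraction for $\lambda$ large enough. Banach's fixed-point theorem delivers a unique $(\mu_t)\in\cE_T$, and plugging it back into the linear ODE gives the unique strong solution of \eqref{MKV}.

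Finally, to obtain the continuity equation \eqref{FKP}, I would apply the chain rule to $\varphi(\ov{\theta}_t)$ for $\varphi\in C^1(\T^d)$ along the trajectory of \eqref{MKV}, take expectations using $\mu_s=\mathrm{Law}(\ov{\theta}_s)$, and differentiate in $t$ to arrive at the weak formulation
\begin{equation*}
\frac{\de}{\de t}\int_{\T^d}\varphi\,\de\mu_t=\int_{\T^d}\nabla\varphi(\theta)\cdot b(\theta,\mu_t)\,\de\mu_t(\theta),
\end{equation*}
which is exactly \eqref{FKP} in the sense of distributions. Uniqueness of weak solutions of \eqref{FKP} follows from the Lipschitz structure: any weak solution $\tilde\mu_t$ is transported by the (well-defined) classical flow of the vector field $b(\cdot,\tilde\mu_t)$, and inserting $\tilde\mu$ into the above fixed-point scheme forces $\tilde\mu=\mu$. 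The main obstacle is the contraction estimate: one must carefully convert the pathwise $\norma{\cdot}_1$ bound into an $L^2$ control, dominate $\W_2^2$ by the squared coupling distance, and tune $\lambda$ so that the exponential weight absorbs the polynomial-in-$T$ prefactors coming from Gr\"onwall.
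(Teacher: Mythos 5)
Your argument is correct and is essentially the paper's approach: the paper proves this theorem by citing \cite[Proposition 1]{Chaintron_2022a} and \cite[Theorem 2.1]{graham1992mckean} and explicitly notes that the underlying argument is the classical Picard iteration, which is exactly the fixed-point scheme on measure-valued curves that you carry out. You simply spell out the contraction estimate and the passage to the continuity equation that the paper delegates to the references.
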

\begin{proof}
This is a consequence of \cite[Proposition 1]{Chaintron_2022a}. See also \cite[Theorem 2.1]{graham1992mckean}.
\end{proof}
The argument used to prove this Proposition is the classical Picard Iteration which can be used to prove the existence and uniqueness of a solution to the system \eqref{particleN}.
\begin{cor}\label{wellpodsyst}
Assume the same hypotheses as in \autoref{wellpod}.  Then, for any $T>0$ and any $i\in\{1,\ldots,N\}$, the system of stochastic differential equations \eqref{particleN} has a unique strong solution.
\end{cor}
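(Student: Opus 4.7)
The plan is to recast the coupled system \eqref{particleN} as a single ordinary differential equation on the product torus $\T^{Nd}$ and then invoke the classical Picard--Lindel\"of theorem, in the same spirit as the proof of \autoref{wellpod} referenced immediately before the corollary. Define the aggregated drift $B:\T^{Nd}\to(\T^{d})^{N}$ component-wise by $B^{i}(\Theta):=b(\theta^{i},\mu_{\Theta})$, where $\mu_{\Theta}:=\frac{1}{N}\sum_{j=1}^{N}\delta_{\theta^{j}}$ is the empirical measure of $\Theta=(\theta^{1},\ldots,\theta^{N})$. The system \eqref{particleN} is then equivalent to the single ODE $\dot{\Theta}_{t}=B(\Theta_{t})$ with prescribed initial datum $\Theta_{0}\in\T^{Nd}$.

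First I would show that $B$ is globally Lipschitz on $\T^{Nd}$. Given two configurations $\Theta=(\theta^{1},\ldots,\theta^{N})$ and $\widetilde{\Theta}=(\widetilde{\theta}^{1},\ldots,\widetilde{\theta}^{N})$, the transport plan that pairs the $i$-th atom of $\mu_{\Theta}$ with the $i$-th atom of $\mu_{\widetilde{\Theta}}$, each with common weight $1/N$, is admissible and yields
\begin{equation*}
\W_{2}^{2}(\mu_{\Theta},\mu_{\widetilde{\Theta}})\leq \frac{1}{N}\sum_{i=1}^{N}\norma{\theta^{i}-\widetilde{\theta}^{i}}_{1}^{2}.
\end{equation*}
Plugging this bound into the Lipschitz hypothesis \eqref{Lipschitz} on $b$ gives, for every $i$,
\begin{equation*}
\norma{B^{i}(\Theta)-B^{i}(\widetilde{\Theta})}_{1}\leq C\,\norma{\theta^{i}-\widetilde{\theta}^{i}}_{1}+C\,\W_{2}(\mu_{\Theta},\mu_{\widetilde{\Theta}}),
\end{equation*}
and summing over $i$ together with the Cauchy--Schwarz inequality produces a Lipschitz estimate for $B$ in terms of $\norma{\Theta-\widetilde{\Theta}}_{1}$, with a constant that depends on $C$ and $N$ but is independent of time.

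Once this Lipschitz regularity of $B$ has been secured, the remainder of the argument is a direct application of Picard iteration: one defines $\Theta^{(k+1)}_{t}:=\Theta_{0}+\int_{0}^{t}B(\Theta^{(k)}_{s})\,\mathrm{d}s$ with $\Theta^{(0)}\equiv\Theta_{0}$, and verifies that the sequence is Cauchy in $C([0,T];\T^{Nd})$ for any $T>0$, whose fixed point is the unique strong solution. Compactness of $\T^{Nd}$ disposes of any issue of finite-time blow-up, so no additional growth condition is required on $b$.

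The only non-routine step is the coupling-by-index inequality that transfers the Lipschitz property of $b$ (stated in terms of the $\W_{2}$-metric on $\M(\T^{d})$) into a Lipschitz property of $B$ (needed in terms of the natural metric on $\T^{Nd}$). I expect this to be the main---and only---genuine obstacle; everything else is classical and could equivalently be obtained by directly quoting \cite[Proposition 1]{Chaintron_2022a}, which is precisely the reference used in the proof of \autoref{wellpod}.
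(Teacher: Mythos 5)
Your proposal is correct, and it is more self-contained than what the paper actually does: the paper disposes of this corollary in one line by citing \cite{erny2022wellposedness} (after remarking, below \autoref{wellpod}, that the underlying mechanism is classical Picard iteration), whereas you supply the full elementary argument. The one genuinely non-routine ingredient you identify is indeed the right one: the diagonal coupling of the two empirical measures, which gives
\begin{equation*}
\W_{2}^{2}\left(\mu_{\Theta},\mu_{\widetilde{\Theta}}\right)\leq \frac{1}{N}\sum_{i=1}^{N}\norma{\theta^{i}-\widetilde{\theta}^{i}}_{1}^{2},
\end{equation*}
and hence, combined with \eqref{Lipschitz} and $\bigl(\sum_{i}a_{i}^{2}\bigr)^{1/2}\leq\sum_{i}a_{i}$, a global Lipschitz bound for the aggregated drift $B$ on $(\T^{d})^{N}$ with constant of order $C(1+\sqrt{N})$; the $N$-dependence is harmless since $N$ is fixed here. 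Two small points worth making explicit if you write this up: first, since $\T^{Nd}$ is not a vector space, the Picard iterates $\Theta^{(k+1)}_{t}=\Theta_{0}+\int_{0}^{t}B(\Theta^{(k)}_{s})\,\de s$ should be run on the universal cover $\R^{Nd}$ and projected back, using the $2\pi$-periodicity of $b$ (this is how the paper treats the dynamics throughout); second, ``strong solution'' in the statement refers to a pathwise solution for each realization of the random initial datum $\Theta_{0}^{N}$, which your deterministic ODE argument delivers realization by realization. What your route buys is independence from the external reference and an explicit Lipschitz constant; what the paper's citation buys is brevity and coverage of more general (genuinely stochastic) settings that are not needed here.
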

\begin{proof}
This result can be derived as a particular case of the results proved in \cite{erny2022wellposedness}.
\end{proof}
In what follows, we recall the notion of propagation of chaos by coupling trajectories.

\begin{defn}\label{propchaos}
Let $T\in (0,+\infty]$, and $1\leq p<+\infty$. Propagation of chaos holds when for all $N\in\N$ there exist

\begin{itemize}
\item[$\bullet$] a system of particles $(\Theta_{t}^{N})_{t}=(\theta_{t}^{1},\ldots, \theta_{t}^{N})_{t}$ with law $\mu_{t}^{N}\in \M((\T^{d})^{N})$ at time $t\leq T$;
\item[$\bullet$] a system of independent stochastic processes $(\ov{\Theta}_{t}^{N})_{t}=(\ov{\theta}_{t}^{1},\ldots, \ov{\theta}_{t}^{N})$ with law $\mu_{t}^{\otimes N}\in \M((\T^{d})^{N})$ at time $t\leq T$, such that $\theta_{0}^{i}=\ov{\theta}_{0}^{i}$ $\P$-a.s. for $i=1,\ldots,N$;
\item[$\bullet$] a number $\varepsilon(N,T)$ such that $\varepsilon(N,T)\rightarrow 0$ as $N\rightarrow +\infty$,
\end{itemize}
such that (pathwise case)

\begin{align}
\frac{1}{N}\sum_{i=1}^{N}\E\left[\sup_{t\leq T}\norma{\theta_{t}^{i}-\ov{\theta}_{t}^{i}}_{1}^{p}\right]\leq \varepsilon(N,T),
\end{align}
or (pointwise case)
\begin{align}
\frac{1}{N}\sum_{i=1}^{N}\sup_{t\leq T}\E\left[\norma{\theta_{t}^{i}-\ov{\theta}_{t}^{i}}_{1}^{p}\right]\leq \varepsilon(N,T).
\end{align}
\end{defn}
Next, we prove propagation of chaos for a sequence of parameters $\Theta_{t}^{N}$ defined through the differential equation \eqref{particleN}. Furthermore, we prove the weak convergence in the sense of the probability measures of the initial measure $\mu_{\Theta_{0}^{N}}$ \cite{Billingsley-Convergence}.
\begin{thm}\label{thmpropchaos}
Let us assume the same hypotheses as in \autoref{wellpod}. Let $d>4$, and let us set $t\in [0,T]$, and $\Theta_{0}^{N}$ be composed by independent and identically distributed random variables supported in $\T^d$. Then the weak-limit in the sense of probability measures of $\mu_{\Theta_{0}^{N}}$, and denoted as

\begin{align}
\mu_{0}\coloneqq \lim_{N\rightarrow +\infty}\mu_{\Theta_{0}^{N}},
\end{align}
exists, and $\mu_{0}\in \M(\T^{d})$. Furthermore, there exists a sequence $(\ov{\Theta}_{t}^{N})_{t}=(\ov{\theta}_{t}^{1},\ldots, \ov{\theta}_{t}^{N})$ of independent, and identically distributed random variables valued in $\T^{d}$ such that the propagation of chaos in the sense of Definition \ref{propchaos} holds true with $p=2$. Additionally, the convergence rate $\varepsilon(N,T)$ is given by 
\begin{align}
\varepsilon(N,T)=C_{1}(b,T)\alpha_{d}(N),
\end{align}
where $C_{1}(b,T)$ is a positive constant depending only on $b,T$, and $\alpha_{d}(N)$ is given by
\begin{align}
\alpha_{d}(N)\coloneqq N^{-\frac{2}{d}}+N^{-\frac{1}{2}}.
\end{align}
\end{thm}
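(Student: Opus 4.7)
The plan is the classical synchronous coupling argument of Sznitman, made quantitative by invoking the sharp empirical-measure rate of Fournier--Guillin. Since $(\theta_0^i)_{i=1}^N$ are i.i.d.\ with common law $\mu_0\in\M(\T^d)$, the empirical measure $\mu_{\Theta_0^N}$ converges weakly to $\mu_0$ almost surely by Varadarajan's law of large numbers for empirical measures on a Polish space, which settles the first claim. To set up the coupling, I define $\overline{\theta}^i_t$ for $i=1,\ldots,N$ as the solution to the McKean equation \eqref{MKV} with initial datum $\overline{\theta}^i_0=\theta^i_0$ and common driving measure $\mu_t$ supplied by \autoref{wellpod}; because the map $\overline{\theta}_0\mapsto (\overline{\theta}_t)_{t\in[0,T]}$ is a deterministic functional depending on $i$ only through the initial datum, the $\overline{\theta}^i$ are i.i.d.\ trajectories, each with marginal $\mu_t$.

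Subtracting \eqref{MKV} from \eqref{particleN} and applying the Lipschitz bound \eqref{Lipschitz} yields
\begin{align*}
\norma{\theta_t^i-\overline{\theta}_t^i}_{1}\leq C\int_0^t \norma{\theta_s^i-\overline{\theta}_s^i}_{1}\,\de s + C\int_0^t \W_2\bigl(\mu_{\Theta_s^N},\mu_s\bigr)\,\de s,
\end{align*}
and combining the triangle inequality with the diagonal-coupling bound $\W_2^2(\mu_{\Theta_s^N},\mu_{\overline{\Theta}_s^N})\leq N^{-1}\sum_j \norma{\theta_s^j-\overline{\theta}_s^j}_1^2$ gives
\begin{align*}
\W_2\bigl(\mu_{\Theta_s^N},\mu_s\bigr)\leq \left(\frac{1}{N}\sum_{j=1}^{N}\norma{\theta_s^j-\overline{\theta}_s^j}_1^2\right)^{\!1/2} + \W_2\bigl(\mu_{\overline{\Theta}_s^N},\mu_s\bigr).
\end{align*}
Squaring the pointwise estimate, using Cauchy--Schwarz to push the time integral inside, taking expectation, averaging over $i$, and exploiting monotonicity of $\sup_{r\leq s}$, I reach an integral inequality of the form
\begin{align*}
u(t)\leq C(T)\int_0^t u(s)\,\de s + C(T)\int_0^t \E\!\left[\W_2^2\bigl(\mu_{\overline{\Theta}_s^N},\mu_s\bigr)\right]\de s,
\end{align*}
where $u(t):=\tfrac{1}{N}\sum_{i=1}^N \E\!\left[\sup_{r\leq t}\norma{\theta_r^i-\overline{\theta}_r^i}_1^2\right]$. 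The external error is then controlled via the Fournier--Guillin rate for empirical measures of i.i.d.\ samples of a compactly supported law in $\R^d$: for $d>4$ this gives $\E[\W_2^2(\mu_{\overline{\Theta}_s^N},\mu_s)]\leq C\alpha_d(N)$ uniformly in $s\in[0,T]$, and Grönwall's lemma closes the loop to produce the claimed bound with $C_1(b,T)$ of the form $T^2 e^{C(b) T^2}$.

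The main obstacle I expect is bookkeeping rather than conceptual: obtaining the pathwise rather than pointwise version of \autoref{propchaos} forces one to control $\sup_{r\leq t}\norma{\theta_r^i-\overline{\theta}_r^i}_1^2$ on the left, which requires bounding the right-hand side of the coupling inequality before taking expectations; in the present deterministic-flow setting (no Itô term) this is handled by an elementary Cauchy--Schwarz on the time integrals. A second technical point is importing the Fournier--Guillin estimate from $\R^d$ to $\T^d$, which is harmless since $\T^d$ has bounded diameter and hence all moments of $\mu_s$ are finite. Finally, the dimensional threshold $d>4$ is precisely the regime $2p<d$ with $p=2$ in Fournier--Guillin, which is what selects $N^{-2/d}$ as the dominant term in $\alpha_d(N)$, the $N^{-1/2}$ being a conservative remainder valid in all dimensions.
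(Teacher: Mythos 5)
Your proposal is correct and follows essentially the same route as the paper: a synchronous (Sznitman) coupling with $\overline{\theta}^i_0=\theta^i_0$, the Lipschitz bound on $b$ combined with the diagonal-coupling estimate $\W_2^2(\mu_{\Theta_s^N},\mu_{\overline{\Theta}_s^N})\leq N^{-1}\sum_j\norma{\theta_s^j-\overline{\theta}_s^j}_1^2$, Gr\"onwall's lemma, and the Fournier--Guillin rate $\E[\W_2^2(\mu_{\overline{\Theta}_s^N},\mu_s)]\leq C_1\alpha_d(N)$ for $d>4$. The only differences are cosmetic (you square after Cauchy--Schwarz where the paper applies Jensen directly to the squared increment), and your remarks on transferring the Fournier--Guillin bound from $\R^d$ to the compact torus and on the origin of the threshold $d>4$ are correct points the paper leaves implicit.
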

\begin{proof}
By assumption all the components of $\Theta_{0}^{N}$ are independent and identically distributed. Then for any measurable bounded function $g:\T^{d}\rightarrow \R$, we have by the law of large numbers that

\begin{align}
\int_{\T^{d}}g(\theta)\de\mu_{\Theta_{0}^{N}}(\theta)\rightarrow \E(g(\theta_{0}^{1})) \hskip 0,2cm \text{as $N\rightarrow+\infty$,}
\end{align}
so that $\mu_{0}={\rm law}(\theta_{0}^{1})$. Since $\theta_{0}^{1}$ is supported in $\T^{d}$, then $\mu_{0}\in \M(\T^d)$. Now, by following \cite[Theorem 3.20]{Chaintron_2022b}, we may consider as system of independent, and identically distributed particles the ones following \eqref{MKV} with the same initial condition $\Theta_{0}^{N}$ as the ones of \eqref{particleN}. So that, we have

\begin{align}
 \norma{\theta_{t}^{i}-\ov{\theta}_{t}^{i}}_{1}^{2}&\leq 2t\int_{0}^{t}\norma{b(\theta_{s}^{i},\mu_{\Theta_{s}^{N}})- b(\ov{\theta}_{s}^{i},f_{s})}_{1}^{2}\de s\\
 &\leq 4C^{2}t\int_{0}^{t}\left(\W_{2}\left(\mu_{\Theta_{s}^{N}},\mu_{\ov{\Theta}_{s}^{N}}\right)+\norma{\theta_{s}^{i}-\ov{\theta}_{s}^{i}}_{1}\right)^{2}\de s\\
 &+ 4C^{2}t\int_{0}^{t}\W_{2}^{2}\left(\mu_{\ov{\Theta}_{s}^{N}},f_{s}\right)\de s
\end{align}
where in the first inequality we have used Jensen inequality, and in the second inequality we have used the Lipschitz assumption. Then by Gr\hol{o}nwall inequality, we have
\begin{align}
\norma{\theta_{t}^{i}-\ov{\theta}_{t}^{i}}_{1}^{2} \leq 8C^{2}T\exp\left(8C^{2}T\right)\int_{0}^{t}\left(\W_{2}^{2}\left(\mu_{\Theta_{s}^{N}},\mu_{\ov{\Theta}_{s}^{N}}\right)+\W_{2}^{2}\left(\mu_{\ov{\Theta}_{s}^{N}},f_{s}\right)\right)\de s,
\end{align}
where $C$ is the Lipschitz constant in \eqref{Lipschitz}. Thus by summing overall $i$, we get
\begin{align}
\W_{2}^{2}\left(\mu_{\Theta_{t}^{N}},\mu_{\ov{\Theta}_{t}^{N}}\right)\leq 8C^{2}T\exp\left(8C^{2}T\right)\exp\left(8C^{2}T^{2}\exp\left(8C^{2}T\right)\right)\int_{0}^{t}\W_{2}^{2}\left(\mu_{\ov{\Theta}_{s}^{N}},f_{s}\right)\de s.
\end{align}
Then by \cite[Theorem 1]{fournier2013}, there exists a positive constant $C_{1}\coloneqq C_{1}(b,T)>0$ only depending on the Lipschtz constant of $b$ and $T$ such that
\begin{align}\label{newbou}
\begin{aligned}
\E\W_{2}^{2}\left(\mu_{\Theta_{t}^{N}},\mu_{\ov{\Theta}_{t}^{N}}\right)&\leq 8C^{2}T\exp\left(8C^{2}T\right)\exp\left(8C^{2}T^{2}\exp\left(8C^{2}T\right)\right)\int_{0}^{t}\E\W_{2}^{2}\left(\mu_{\ov{\Theta}_{s}^{N}},f_{s}\right)\de s\\
&\leq 8C^{2}T^{2}\exp\left(8C^{2}T\right)\exp\left(8C^{2}T^{2}\exp\left(8C^{2}T\right)\right)C_{1}\alpha_{d}(N).
\end{aligned}
\end{align}
Therefore,

\begin{align}
\frac{1}{N}\sum_{i=1}^{N}\E\sup_{0\leq t\leq T}\norma{\theta_{t}^{i}-\ov{\theta}_{t}^{i}}_{1}^{2}\leq  8C^{2}T^{2}\exp\left(8C^{2}T\right)(C_{2}(T)+1)\alpha_{d}(N)
\end{align}
where 
\begin{align}
C_{2}(T)\coloneqq 8C^{2}T^{2}\exp\left(8C^{2}T\right)\exp\left(8C^{2}T^{2}\exp\left(8C^{2}T\right)\right)C_{1}. 
\end{align}
\end{proof}

\begin{lem}\label{convergencelem}
Let $d>4$. Then 

\begin{align}\label{constantc1}
\E\W_{2}^{2}\left(\mu_{\Theta_{t}^{N}},\mu_{t}\right)\leq C_{1}\alpha_{d}(N),
\end{align}
where $C_{1}\coloneqq C_{1}(b,T)$ is a positive constant only depending on the Lipschitz constant of $b$, and $T$.
\end{lem}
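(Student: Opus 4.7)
The plan is to deduce this lemma almost directly from \autoref{thmpropchaos}, by inserting the iid coupling $\bar{\Theta}_t^N = (\bar{\theta}_t^1,\dots,\bar{\theta}_t^N)$ of McKean particles (with law $\mu_t^{\otimes N}$) between $\mu_{\Theta_t^N}$ and $\mu_t$ and applying the triangle inequality for $\W_2$.

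First I would invoke the triangle inequality and the elementary bound $(a+b)^2 \leq 2a^2 + 2b^2$ to write
\begin{equation*}
\E\W_{2}^{2}(\mu_{\Theta_{t}^{N}},\mu_{t})
\,\leq\, 2\,\E\W_{2}^{2}(\mu_{\Theta_{t}^{N}},\mu_{\ov{\Theta}_{t}^{N}})
\,+\, 2\,\E\W_{2}^{2}(\mu_{\ov{\Theta}_{t}^{N}},\mu_{t}).
\end{equation*}

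Next I would handle the first term by using the natural discrete coupling $\pi_N \coloneqq \frac{1}{N}\sum_{i=1}^{N}\delta_{(\theta_t^i,\ov{\theta}_t^i)}$, which has marginals $\mu_{\Theta_t^N}$ and $\mu_{\ov{\Theta}_t^N}$. This gives the crude but sufficient estimate
\begin{equation*}
\W_{2}^{2}(\mu_{\Theta_{t}^{N}},\mu_{\ov{\Theta}_{t}^{N}}) \,\leq\, \frac{1}{N}\sum_{i=1}^{N}\norma{\theta_t^i - \ov{\theta}_t^i}_1^2,
\end{equation*}
and the right-hand side was already bounded by $C(b,T)\,\alpha_d(N)$ in the proof of \autoref{thmpropchaos}, via \eqref{newbou} and Gr\"onwall.

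For the second term, since the $\ov{\theta}_t^i$ are iid with common law $\mu_t$ and are valued in the compact torus $\T^d$, I would apply the empirical-measure convergence rate of Fournier--Guillin \cite[Theorem 1]{fournier2013}, exactly as invoked inside the proof of \autoref{thmpropchaos}. This yields $\E\W_2^2(\mu_{\ov{\Theta}_t^N},\mu_t) \leq C\,\alpha_d(N)$ for $d>4$, with a constant depending only on $\mu_t$ (hence ultimately on $b$ and $T$).

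Combining the two bounds produces a constant $C_1 = C_1(b,T)$ for which $\E\W_2^2(\mu_{\Theta_t^N},\mu_t) \leq C_1\,\alpha_d(N)$, as claimed. I do not anticipate any real obstacle: the only delicate ingredient is the empirical-measure rate of Fournier--Guillin, but it was already used in \autoref{thmpropchaos} and the hypothesis $d>4$ is imposed precisely so that $\alpha_d(N) = N^{-2/d}+N^{-1/2}$ is the correct rate in that theorem.
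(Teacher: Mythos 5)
Your proposal is correct and follows essentially the same route as the paper: a triangle-inequality decomposition through the intermediate empirical measure $\mu_{\ov{\Theta}_{t}^{N}}$ of the iid McKean particles, bounding the first term via \eqref{newbou} and the second via the Fournier--Guillin rate \cite[Theorem 1]{fournier2013}. Your inclusion of the factor $2$ from $(a+b)^{2}\leq 2a^{2}+2b^{2}$ is in fact slightly more careful than the paper's own writeup, which applies the triangle inequality directly to $\W_{2}^{2}$; this only changes the constant $C_{1}$.
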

\begin{proof}
Note that by the triangle inequality
\begin{align}
\E\W_{2}^{2}\left(\mu_{\Theta_{t}^{N}},\mu_{t}\right) \leq \E\W_{2}^{2}\left(\mu_{\Theta_{t}^{N}},\mu_{\ov{\Theta}_{t}^{N}}\right)+  \E\W_{2}^{2}\left(\mu_{\ov{\Theta}_{t}^{N}},\mu_{t}\right).
\end{align}
Then, by applying \eqref{newbou}, and \cite[Theorem 1]{fournier2013} we get the desired bound.
\end{proof}

\section{Mean field limit for general objective functions}\label{sec:generalmean}
In what follows, we consider the following objective function given by a uniform mixture of identical experts:

 \begin{align}\label{mixMoE}
    \begin{aligned}
        F(\Theta,x)&\coloneqq\frac{1}{N}\sum_{i=1}^{N}f(\theta^{i},x), \hskip 0,1cm \Theta\coloneqq (\theta^{1},\ldots, \theta^{N})\in (\mathbb{T}^{d})^{N},
        \end{aligned}
    \end{align}
    where $\mathbb{T}^{d}$ is the $d$-dimensional $2\pi$-periodic Torus, and $f:\mathbb{T}^{d}\times \mathcal{X}\rightarrow \R$ is a generic model function with suitable regularity properties. In what follows, as a cost function associated to the training set $\D$, we consider the mean squared error that is defined as

\begin{align}\label{costfunct1}
 \L(\Theta)\coloneqq \frac{1}{2}\sum_{j
=1}^{n}\left(F(\Theta,x_{j})-y_{j}\right)^{2}.  
\end{align}    
Next, we analyze the following gradient flow equation
\begin{align}\label{gradform1}
\frac{\de \Theta_{t}}{\de\,t}=-N\nabla_{\Theta}\L(\Theta_{t}), \hskip 0,2cm \Theta_{t}\in (\mathbb{T}^{d})^{N}.
\end{align}
We may write \eqref{gradform1} as follows:

\begin{align}\label{gradeq3}
\frac{\de \theta_{t}^{i}}{\de t}&=\sum_{j=1}^{n}\nabla_{\theta}f(\theta_{t}^{i},x_{j})\left(y_{j}-F(\Theta_{t},x_{j})\right), \hskip 0,2cm i=1,\ldots, N.
\end{align}

Let us define the {\em empirical probability measure} as follows

\begin{align}\label{eq:empiricalmeasure}
\mu_{\Theta_{t}^{N}}\coloneqq \frac{1}{N}\sum_{i=1}^{N}\delta_{\theta_{t}^{i}}.
\end{align}

Notice that by the definition of $F(\Theta_{t},x)$, it can be written as 

\begin{align}
F(\Theta_{t},x)= \frac{1}{N}\sum_{i=1}^{N}f(\theta_{t}^{i},x)=\E_{\ov{\alpha}\sim \mu_{\Theta_{t}^{N}}}\left[f(\ov{\alpha},x)\right].
\end{align}
Then, we may write \eqref{gradeq3} as 

\begin{align}\label{gradeq4}
\frac{\de \theta_{t}^{i}}{\de t}&=\sum_{j=1}^{n}\nabla_{\theta}f(\theta_{t}^{i},x_{j})\left(y_{j}-\E_{\ov{\alpha}\sim \mu_{\Theta_{t}^{N}}}\left[f(\ov{\alpha},x_{j})\right]\right), \hskip 0,2cm i=1,\ldots, N.
\end{align}
We let

\begin{align}\label{ourfield}
b(\theta_{t}^{i},\mu_{\Theta_{t}^{N}})\coloneqq\sum_{j=1}^{n}\nabla_{\theta}f(\theta_{t}^{i},x_{j})\left(y_{j}-\E_{\ov{\alpha}\sim \mu_{\Theta_{t}^{N}}}\left[f(\ov{\alpha},x_{j})\right]\right), \hskip 0,2cm i=1,\ldots, N,
\end{align}
and thus \eqref{gradeq4} can be given in the form

\begin{align}\label{gradeq5}
\frac{\de \theta_{t}^{i}}{\de t}=b(\theta_{t}^{i},\mu_{\Theta_{t}^{N}}), \hskip 0,2cm i=1,\ldots, N.
\end{align}
Next, we consider the following hypotheses about the model function $f$.

\begin{ass}\label{asummpt1}
We suppose that $f(\cdot,x)$ is continuous and $2\pi$-periodic in each component of $\theta$, that $\vert f(\cdot,x)\vert\leq 1$ for any $x\in\mathcal{X}$, and that the following bound holds true: there exists a positive constant $\alpha>0$ such that for any $i\in \{1,\ldots, d\}$,  and for any $x\in \mathcal{X}$
\begin{align}\label{A1eq1}
\vert\partial_{\theta_{i}}f(\theta,x)\vert\leq \alpha   
\end{align}
Furthermore, we suppose that  there exists a positive constant $\beta$ such that for any $i,j\in \{1,\ldots, d\}$, and any $x\in \mathcal{X}$, 
\begin{align}\label{A1eq2}
\vert \partial_{\theta_{i}}\partial_{\theta_{j}}f(\theta,x)\vert\leq \beta.  
\end{align}
\end{ass}
With this assumption, we are able to prove that $\nabla_{\theta}f(\theta,x)$, and $f(\theta)$ are Lipschitz with respect to the $\ell^{1}$-norm. This is the content of the following Proposition.

\begin{prop}\label{prop:general}
Let $(\mathbb{T}^{d},\norma{\cdot}_{1})$, and suppose that \autoref{asummpt1} holds true. Then 

\begin{align}\label{okaylipst1}
&\norma{\nabla_{\theta}f(\theta,x)-\nabla_{\theta}f(\theta',x)}_{1}\leq d\beta\norma{\theta-\theta'}_{1},\\\label{okaylipst2}
&\vert f(\theta,x)-f(\theta',x)\vert\leq \alpha\norma{\theta-\theta'}_{1}.
\end{align}
\end{prop}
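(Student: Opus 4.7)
My plan is to deduce both bounds from the assumed pointwise bounds on first and second partial derivatives via the mean value theorem applied coordinate-by-coordinate. Since $f(\cdot,x)$ is $2\pi$-periodic in each entry and $\|\theta-\theta'\|_1$ on $\mathbb{T}^d$ is the quotient $\ell^1$ distance, I may lift $\theta,\theta'$ to representatives in $\R^d$ whose Euclidean coordinatewise differences $|\theta_i-\theta_i'|$ realize the torus distance, and then work on $\R^d$ using the periodic extension of $f$, whose partial derivatives obey exactly the same bounds \eqref{A1eq1}--\eqref{A1eq2}.

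For \eqref{okaylipst2}, I will introduce the telescoping interpolation $\eta^{(k)}\coloneqq(\theta_1',\ldots,\theta_k',\theta_{k+1},\ldots,\theta_d)$ for $k=0,\ldots,d$, so that $\eta^{(0)}=\theta$ and $\eta^{(d)}=\theta'$. Each consecutive pair $\eta^{(k-1)},\eta^{(k)}$ differs only in the $k$-th coordinate, so by the one-dimensional mean value theorem there exists $\xi_k$ between $\theta_k$ and $\theta_k'$ such that
\begin{equation*}
f(\eta^{(k-1)},x)-f(\eta^{(k)},x)=\partial_{\theta_k}f(\ldots,\xi_k,\ldots,x)\,(\theta_k-\theta_k').
\end{equation*}
Summing these identities, applying the triangle inequality and the bound $|\partial_{\theta_k}f|\leq\alpha$ from \eqref{A1eq1} yields $|f(\theta,x)-f(\theta',x)|\leq\alpha\sum_{k=1}^d|\theta_k-\theta_k'|=\alpha\|\theta-\theta'\|_1$.

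For \eqref{okaylipst1}, I will apply exactly the same telescoping argument, this time to the scalar function $\theta\mapsto\partial_{\theta_\ell}f(\theta,x)$ for each fixed $\ell\in\{1,\ldots,d\}$. Its partial derivatives are the mixed second partials $\partial_{\theta_\ell}\partial_{\theta_k}f$, which by \eqref{A1eq2} are bounded by $\beta$. The same argument therefore gives
\begin{equation*}
|\partial_{\theta_\ell}f(\theta,x)-\partial_{\theta_\ell}f(\theta',x)|\leq\beta\|\theta-\theta'\|_1
\end{equation*}
for every $\ell$. Summing over $\ell=1,\ldots,d$ to form the $\ell^1$-norm of the gradient produces the factor $d$ and yields the claimed bound $\|\nabla_\theta f(\theta,x)-\nabla_\theta f(\theta',x)\|_1\leq d\beta\|\theta-\theta'\|_1$.

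There is no real obstacle here; the only mild subtlety is verifying that the torus quotient distance is correctly matched to a lift in $\R^d$ along which the coordinate-wise mean value theorem applies and the periodic extension of $f$ inherits the same derivative bounds. Once that identification is made, the argument is a direct telescoping over coordinates followed by the pointwise bounds in \autoref{asummpt1}.
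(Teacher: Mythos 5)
Your proof is correct and reaches the same constants, but it is packaged differently from the paper's argument. The paper bounds the $\ell^{1}\rightarrow\ell^{1}$ operator norm of the differential: for \eqref{okaylipst1} it introduces the map $dh_{x}(\theta)\colon v\mapsto\sum_{i}\partial_{\theta_{i}}\nabla_{\theta}f(\theta,x)v_{i}$, shows $\norma{dh_{x}(\theta)}_{\ell^{1}\rightarrow\ell^{1}}\leq d\beta$ using \eqref{A1eq2}, and invokes the mean value inequality; for \eqref{okaylipst2} it bounds $\norma{df(\theta,x)}_{\rm op}\leq\alpha$ as an $\ell^{1}$-dual norm of the gradient and concludes the same way. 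You instead telescope coordinate by coordinate and apply the one-dimensional mean value theorem, once to $f$ and once to each $\partial_{\theta_{\ell}}f$, then sum over $\ell$ to collect the factor $d$. The two routes are equivalent in substance (both reduce the Lipschitz estimates to the pointwise derivative bounds of \autoref{asummpt1}), but yours is more elementary in that it avoids operator norms altogether, and you are more careful about the one genuine subtlety — lifting points of $\mathbb{T}^{d}$ to representatives in $\R^{d}$ realizing the quotient $\ell^{1}$ distance so that the segment-based mean value argument applies to the periodic extension — which the paper's proof passes over silently (its mean value inequality also implicitly requires such a geodesic lift on the torus). No gap in either direction.
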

\begin{proof}
Let us set  

\begin{align}
   dh_{x}(\theta):(\mathbb{T}^{d},\ell^{1})\rightarrow (\mathbb{T}^{d},\ell^{1}),\hskip 0,2cm v\mapsto \sum_{i=1}^{d}\partial_{\theta_{i}}\nabla_{\theta}f(\theta,x)v_{i}. 
\end{align}
Notice that

\begin{align}
    \norma{dh_{x}\theta}_{\ell^{1}\rightarrow \ell^{1}}&=\sup_{\norma{v}_{1}\leq 1}\norma{dh_{x}(\theta)v}_{1}\\
    &=\sup_{\norma{v}_{1}\leq 1}\norma{\sum_{i=1}^{d}\partial_{\theta_{i}}\nabla_{\theta}f(\theta,x)v_{i}}_{1}\\
    &=\sup_{\norma{v}_{1}\leq 1}\sum_{j=1}^{d}\sum_{i=1}^{d}\vert \partial_{\theta_{i}}\partial_{\theta_{j}}f(\theta,x)v_{i}\vert\\
    &\leq \sup_{\norma{v}_{1}\leq 1}\sum_{j=1}^{d}\sum_{i=1}^{d}\beta\vert v_{i}\vert
\end{align}
where in the last inequality we have used \eqref{A1eq2}. Hence

\begin{align}
    \norma{dh_{x}\theta}_{\ell^{1}\rightarrow \ell^{1}}\leq d\beta,
\end{align}
and thus

\begin{align}
\norma{\nabla_{\theta}f(\theta,x)-\nabla_{\theta}f(\theta',x)}_{1}&\leq \norma{dh_{x}(\theta)}_{\ell^{1}\rightarrow \ell^{1}}\norma{\theta-\theta'}_{1}\\
&\leq d\beta\norma{\theta-\theta'}_{1}.
\end{align}
On the other hand, we have 
\begin{align}
\vert f(\theta,x)-f(\theta',x) \vert\leq \max_{\theta\in \mathbb{T}^{d}}\norma{df(\theta,x)}_{{\rm op}}\norma{\theta-\theta'}_{1}.
\end{align} 
Let us estimate $\norma{df(\theta,x)}_{{\rm op}}$. By \eqref{A1eq1}, we get

\begin{align}
\norma{df(\theta,x)}_{{\rm op}}&=\sup_{\norma{v}_{1}\leq 1}\vert \nabla f(\theta,x)\cdot v\vert\\
&=\sup_{\norma{v}_{1}\leq 1}\sum_{i=1}^{d}\vert \partial_{\theta_{i}}f(\theta,x)v_{i}\vert\\
&=\sup_{\norma{v}_{1}\leq 1}\sum_{i=1}^{d} \alpha\vert v_{i}\vert\\
&\leq \alpha.
\end{align}
Therefore,

\begin{align}
 \vert f(\theta,x)-f(\theta',x) \vert\leq  \alpha\norma{\theta-\theta'}_{1}.
\end{align}
\end{proof}

\begin{thm}\label{thm:fieldlip}
Suppose that \autoref{asummpt1} holds true, and let $b:\mathbb{T}^{d}\times \M(\mathbb{T}^{d})\rightarrow \T^{d}$ be defined as in \eqref{ourfield}. Then 
\begin{align}
\norma{b(z_{1},\mu)-b(z_{2},\nu)}_{1} \leq C\left(\norma{z_{1}-z_{2}}_{1}+\W_{2}(\mu,\nu)\right),
\end{align}
where 
\begin{align}
C=\max\left\{d\beta n(A+1),\alpha^{2}dn\right\},\hskip 0,1cm A\coloneq \max_{1\leq j\leq n}\vert y_{j}\vert,
\end{align}
and $\W_{2}(\mu,\nu)$ is the $2$-Wasserstein distance for probability measures on $(\mathbb{T}^{d},\norma{\cdot}_{1})$.
\end{thm}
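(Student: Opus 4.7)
The plan is to perform a standard add-and-subtract split, bounding one piece by the Lipschitz estimate for $\nabla_\theta f$ (giving the $\|z_1-z_2\|_1$ term) and the other by the Lipschitz estimate for $f$ together with Kantorovich--Rubinstein duality (giving the $\W_{2}(\mu,\nu)$ term). Concretely, I would write
\begin{equation*}
b(z_{1},\mu)-b(z_{2},\nu)=\underbrace{b(z_{1},\mu)-b(z_{2},\mu)}_{(\mathrm{I})}+\underbrace{b(z_{2},\mu)-b(z_{2},\nu)}_{(\mathrm{II})}
\end{equation*}
and estimate the two pieces separately using \autoref{prop:general} and \autoref{asummpt1}.

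For $(\mathrm{I})$, using $|f(\cdot,x)|\leq 1$ and $|y_{j}|\leq A$, each scalar factor $y_{j}-\E_{\ov\alpha\sim\mu}[f(\ov\alpha,x_{j})]$ is bounded in absolute value by $A+1$. Pairing this with the Lipschitz bound \eqref{okaylipst1} for $\nabla_{\theta}f$ from \autoref{prop:general} and summing over the $n$ data points yields
\begin{equation*}
\norma{(\mathrm{I})}_{1}\leq \sum_{j=1}^{n}\norma{\nabla_{\theta}f(z_{1},x_{j})-\nabla_{\theta}f(z_{2},x_{j})}_{1}\,(A+1)\leq d\beta n(A+1)\norma{z_{1}-z_{2}}_{1}.
\end{equation*}

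For $(\mathrm{II})$, the gradient factor $\nabla_{\theta}f(z_{2},x_{j})$ is common and, by \eqref{A1eq1}, has $\ell^{1}$-norm at most $d\alpha$. Therefore
\begin{equation*}
\norma{(\mathrm{II})}_{1}\leq \sum_{j=1}^{n}\norma{\nabla_{\theta}f(z_{2},x_{j})}_{1}\,\bigl|\E_{\ov\alpha\sim\nu}[f(\ov\alpha,x_{j})]-\E_{\ov\alpha\sim\mu}[f(\ov\alpha,x_{j})]\bigr|\leq d\alpha\sum_{j=1}^{n}\bigl|\E_{\mu}f(\cdot,x_{j})-\E_{\nu}f(\cdot,x_{j})\bigr|.
\end{equation*}
Now \eqref{okaylipst2} says that $f(\cdot,x_{j})$ is $\alpha$-Lipschitz on $(\T^{d},\norma{\cdot}_{1})$, so Kantorovich--Rubinstein duality gives
\begin{equation*}
\bigl|\E_{\mu}f(\cdot,x_{j})-\E_{\nu}f(\cdot,x_{j})\bigr|\leq \alpha\,\W_{1}(\mu,\nu)\leq \alpha\,\W_{2}(\mu,\nu),
\end{equation*}
using $\W_{1}\leq \W_{2}$. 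Summing over $j$ yields $\norma{(\mathrm{II})}_{1}\leq \alpha^{2}dn\,\W_{2}(\mu,\nu)$.

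Combining the two estimates and taking the larger of the two coefficients as $C$ gives the claim. The only subtle point is the second step: one has to notice that the Lipschitz constant of $f(\cdot,x)$ in \autoref{prop:general} is $\alpha$ in the $\ell^{1}$-sense, so that Kantorovich--Rubinstein produces an $\alpha\,\W_{1}$ bound (upgraded to $\W_{2}$ by monotonicity of Wasserstein orders), and that the gradient factor contributes the extra $d\alpha$. Everything else is routine use of the bounds $|f|\leq 1$, $|\partial_{\theta_{i}}f|\leq \alpha$, and $|\partial_{\theta_{i}\theta_{j}}^{2}f|\leq\beta$ from \autoref{asummpt1}.
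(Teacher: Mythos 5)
Your proposal is correct and follows essentially the same route as the paper: the same add-and-subtract decomposition, the bound $|y_j-\E_\mu f|\le A+1$ combined with \eqref{okaylipst1} for the spatial increment, and the Kantorovich--Rubinstein bound via \eqref{okaylipst2} together with $\W_1\le\W_2$ and $\norma{\nabla_\theta f}_1\le d\alpha$ for the measure increment. No gaps; the constants match the paper's exactly.
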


\begin{proof}[Proof of \autoref{thm:fieldlip}]
Notice that since the model function $f(\cdot,x)$ is $2\pi$-periodic, then $\nabla_{\theta}f(\cdot,x)$ too, and thus $b(\theta,\mu)$ is $2\pi$-periodic, so that the differential equation on $\T^{d}$ is well-defined. Let us take $z_{1},z_{2}\in \mathbb{T}^{d}$, and $\mu,\nu\in \M(\mathbb{T}^{d})$. Since $\vert f(\cdot,x)\vert\leq 1$, we get

\begin{align}
\norma{b(z_{1},\mu)-b(z_{2},\mu)}_{1} &\leq \sum_{k=1}^{d}\sum_{j=1}^{n}\vert \partial_{z_{k}}f(z_{1},x_{j})-\partial_{z_{k}}f(z_{2},x_{j})\vert \vert y_{j}-\E_{\ov{\alpha}\sim \mu}\left[f(\ov{\alpha},x_{j})\right]\vert\\
&\leq (1+A)\sum_{k=1}^{d}\sum_{j=1}^{n}\vert \partial_{z_{k}}f(z_{1},x_{j})-\partial_{z_{k}}f(z_{2},x_{j})\vert\\
&=(1+A)\sum_{j=1}^{n}\norma{\nabla_{\theta}f(z_{1},x_{j})-\nabla_{\theta}f(z_{2},x_{j})}_{1}\\
&\leq d\beta n(A+1)\norma{z_{1}-z_{2}}_{1},
\end{align}
where in the last inequality we have used \eqref{okaylipst1}. Further, by \eqref{okaylipst2}, $f(\theta,x)$ is Lipschitz with constant $\alpha$, then
\begin{align}
\norma{b(z_{1},\mu)-b(z_{1},\nu)}_{1} &\leq \sum_{k=1}^{d}\sum_{j=1}^{n}\vert \partial_{z_{k}}f(z_{1},x_{j})\vert \E_{\ov{\alpha}\sim \mu}\left[f(\ov{\alpha},x_{j})\right]- \E_{\ov{\alpha}\sim \nu}\left[f(\ov{\alpha},x_{j})\right]\vert\\
&\leq\alpha\sum_{k=1}^{d}\sum_{j=1}^{n}\vert \partial_{z_{k}}f(z_{1},x_{j})\vert\W_{1}(\mu,\nu)\\
&\leq \alpha^{2}dn\W_{1}(\mu,\nu),
\end{align}
where in the last two inequalities we have used \eqref{okaylipst2}. Therefore,

\begin{align}
\norma{b(z_{1},\mu)-b(z_{1},\nu)}_{1} &\leq \alpha^{2}dn\W_{1}(\mu,\nu)\\
&\leq \alpha^{2}dn\W_{2}(\mu,\nu)\\
\end{align}
where in the last inequality we have used \cite[Chapter 2, Formula 2.1]{panaretos2020invitation}.  Lastly, since
\begin{align}
C=\max\left\{d\beta n(A+1),\alpha^{2}dn\right\}
\end{align}
then

\begin{align}
\norma{b(z_{1},\mu)-b(z_{2},\nu)}_{1} \leq C\left(\norma{z_{1}-z_{2}}_{1}+\W_{2}(\mu,\nu)\right).
\end{align}
\end{proof}

\begin{thm}\label{ourmainthm}
Let us fix $T>0$, and $d>4$. Let us consider the system \eqref{gradeq5} with initial conditions $\Theta_{0}^{N}$ composed by independent and identically distributed random variables with values in $\mathbb{T}^{d}$. The following assertions hold true.
\begin{itemize}
\item[(I)] The system \eqref{gradeq5} has a unique strong solution.
\item[(II)] There exists a sequence of independent and identically distributed random variables $\ov{\Theta}_{t}^{N}$ for which propagation of chaos in the sense of Definition \ref{propchaos} holds true with $p=2$.
\item[(III)] The sequence $(\mu_{\Theta_{t}^{N}})$ weakly converges to $\mu_{t}\in \M(\T^{d})$  which is the unique solution of the continuity equation
 \begin{align}\label{FKP1}
\frac{\de \mu_{t}(\theta)}{\de t}=-\nabla_{\theta}\cdot\left(b(\theta,\mu_{t})\mu_{t}\right), \hskip 0,2cm \text{with initial condition $\mu_{0}$}.
\end{align}
Furthermore, the following bound holds true. There exists a positive constant $C_{1}\coloneqq C_{1}(b,T)$ only depending on the Lipschitz constant of $b$, and $T$ such that for each $t\in [0,T]$ one has
\begin{align}
\E\W_{2}^{2}\left(\mu_{\Theta_{t}^{N}},\mu_{t}\right)\leq C_{1}\alpha_{d}(N)
\end{align}
where 
\begin{align}
\alpha_{d}(N)\coloneqq N^{-\frac{2}{d}}+N^{-\frac{1}{2}}.
\end{align}
\end{itemize}
\end{thm}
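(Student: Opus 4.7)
The plan is to recognize that the three assertions are almost immediate corollaries of the results proved earlier in the section, once we verify that the drift $b$ defined in \eqref{ourfield} satisfies the global Lipschitz hypothesis \eqref{Lipschitz}. Indeed, by \autoref{asummpt1} we have the bounds \eqref{A1eq1}--\eqref{A1eq2} on the first and second partial derivatives of $f(\cdot,x)$, which, via \autoref{prop:general}, give $\|\cdot\|_1$-Lipschitz continuity of both $f(\cdot,x)$ and $\nabla_\theta f(\cdot,x)$. Plugging these bounds into the definition of $b$ and using $|f|\le 1$ together with $A=\max_j|y_j|<\infty$ is exactly what \autoref{thm:fieldlip} does, and the conclusion is that $b$ is globally Lipschitz with constant $C=\max\{d\beta n(A+1),\,\alpha^2 d n\}$ in the joint variables $(\theta,\mu)$ with respect to $\|\cdot\|_1+\W_2$.

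Once this Lipschitz bound is in hand, part (I) follows directly from \autoref{wellpodsyst}: the $N$-particle ODE system \eqref{gradeq5} has the form \eqref{particleN} with the Lipschitz drift $b$, so strong existence and uniqueness on $[0,T]$ is guaranteed. For part (II), we invoke \autoref{thmpropchaos} with the same $b$ and initial i.i.d.\ vector $\Theta_0^N\subset \T^d$; since the hypothesis $d>4$ is assumed, the theorem produces the required i.i.d.\ sequence $\ov{\Theta}_t^N$, coupled through the common initial condition, satisfying pathwise propagation of chaos at rate $\varepsilon(N,T)=C_1(b,T)\alpha_d(N)$ with $p=2$.

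For part (III), the quantitative bound $\E\,\W_2^2(\mu_{\Theta_t^N},\mu_t)\le C_1\alpha_d(N)$ is exactly the statement of \autoref{convergencelem}, obtained via the triangle inequality by splitting through the empirical measure of the i.i.d.\ McKean particles and applying \eqref{newbou} together with \cite[Theorem 1]{fournier2013}. To upgrade this $\W_2$-control to weak convergence of $\mu_{\Theta_t^N}$ toward $\mu_t$, note that $\E\,\W_2^2(\mu_{\Theta_t^N},\mu_t)\to 0$ implies convergence in probability of $\W_2(\mu_{\Theta_t^N},\mu_t)$ to zero, and on the compact space $\T^d$ convergence in $\W_2$ is equivalent to weak convergence of probability measures (see, e.g., \cite[Ch.~2]{panaretos2020invitation} or \cite{MR2459454}); the limit $\mu_t$ is the unique weak solution of \eqref{FKP1} provided by \autoref{wellpod}.

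The only nonroutine step is really the passage from the expected $\W_2^2$-bound to genuine weak convergence of the random empirical measures, but since $\T^d$ is compact and $\W_2$ metrizes weak convergence there, no additional tightness or moment argument is needed. Thus the entire proof of \autoref{ourmainthm} amounts to assembling \autoref{thm:fieldlip}, \autoref{wellpodsyst}, \autoref{thmpropchaos} and \autoref{convergencelem} in sequence, with the Lipschitz property of $b$ as the sole input requiring verification, and that verification has already been carried out.
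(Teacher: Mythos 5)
Your proposal is correct and follows essentially the same route as the paper: verify that $b$ is globally Lipschitz via \autoref{thm:fieldlip}, then assemble \autoref{wellpodsyst} for (I), \autoref{thmpropchaos} for (II), and \autoref{convergencelem} together with \autoref{wellpod} for (III). The only difference is presentational: the paper additionally recapitulates the Cauchy-sequence construction of the limiting i.i.d.\ particles $\ov{\Theta}_t^N$ in $L^2(\Omega,C([0,T],\T^d))$, whereas you simply invoke \autoref{thmpropchaos}, which already guarantees their existence.
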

\begin{proof}
By Corollary \ref{wellpodsyst}, we obtain the existence and uniqueness of a strong solution to \eqref{gradeq5}. Let us prove item (II). Notice that by \autoref{thmpropchaos}, there exists a sequence $\ov{\Theta}_{t}^{N}=(\ov{\theta}_{t}^{1}, \ldots, \ov{\theta}_{t}^{N})$ for which the propagation of chaos holds true. For the sake of simplicity, let us recall the procedure to find such a sequence. First notice that for any $i\ge1$, the sequence $(\theta_{t}^{i,N})_{t}$ defined through \eqref{gradeq5} weakly converges for $N\to\infty$ in $L^{2}(\Omega, C([0,T],\T^{d}))$ where $(\Omega,\P)$ is a common probability space for them. Indeed, let $N_{1}>N_{2}$, and consider $\Theta_{t}^{N_{1}}$ where the first $N_{2}$ components have the same initial condition as $\theta_{t}^{1,N_{2}},\ldots,\theta_{t}^{N_{2},N_{2}}$. We have that

\begin{align}
\E\left[\sup_{t\leq T}\norma{\theta_{t}^{1,N_{1}}- \theta_{t}^{1,N_{2}}}_{1}^{2}\right]\leq 2T\int_{0}^{T}\E\norma{b(\theta_{t}^{1,N_{1}},\mu_{\Theta_{t}^{N_{1}}})-b(\theta_{t}^{1,N_{2}},\mu_{\Theta_{t}^{N_{2}}})}_{1}^{2}\de t.
\end{align}
By following \cite[Theorem 3.1]{Chaintron_2022b}, we get

\begin{align}
\E\left[\sup_{t\leq T}\norma{\theta_{t}^{1,N_{1}}- \theta_{t}^{1,N_{2}}}_{1}^{2}\right]\leq C_{1}(b,T)\left(\frac{1}{N_{2}}-\frac{1}{N_{1}}\right) + C_{2}(b,T)\int_{0}^{T}\E\left[\norma{\theta_{t}^{1,N_{1}}- \theta_{t}^{1,N_{2}}}_{1}^{2}\right]\de t,
\end{align}
where $C_{1}(b,T), C_{2}(b,T)$ are positive constants only depending on $b$ and $T$. Thus by Gr\hol{o}nwall inequality, we get

\begin{align}\label{stimaL2}
\E\left[\sup_{t\leq T}\norma{\theta_{t}^{1,N_{1}}- \theta_{t}^{1,N_{2}}}_{1}^{2}\right] \leq  C_{1}(b,T)\left(\frac{1}{N_{2}}-\frac{1}{N_{1}}\right)\exp(C_{2}(b,T)T).
\end{align}
Then \eqref{stimaL2} implies that $(\theta_{t}^{1,N})$ is a Cauchy sequence in $L^{2}(\Omega, C([0,T],\T^{d}))$. Then  there exists a variable $\ov{\theta}_{t}^{1}$ which is the  limit of such a sequence in the space $L^{2}(\Omega, C([0,T],\T^{d}))$. Applying the same reasoning for any $k\in\N$, we find $\ov{\theta}_{t}^{k}$ as the limit of $(\theta_{t}^{k,N})$. Furthermore, by following \cite[Proof of Theorem 3.1, Step 3]{Chaintron_2022b}, we get

\begin{align}\label{seq}
\ov{\theta}_{t}^{k}=\theta_{0}^{k}+ \int_{0}^{t}b(\ov{\theta}_{s}^{k},\mu_{s})\de s
\end{align}
where $\mu_{t}={\mathrm Law}(\ov{\theta}_{t}^{k})$, and the all variables $(\ov{\theta}_{t}^{k})_{t}$, $k\in\N$, are independent. We can take this sequence as the one required in \autoref{propchaos}. Now,  let us now note that item (III) follows by \autoref{convergencelem}, and \autoref{wellpod}.
\end{proof}

\section{Experts given by quantum circuits}\label{sub:quantum}
In this section, we denote by $m\in \N$ the number of qubits of the quantum circuit implementing each expert. 
Let $\CC^{2}$ be the Hilbert space of a single qubit. We consider an observable $\O$ on the Hilbert space $\H=\left(\CC^{2}\right)^{\otimes m}$ with $\|\O\|\le1$.
The model function of each expert is then
    \begin{align}\label{model1}
    f(\theta,x)&\coloneqq \bra{0^{m}}U^{\dag}(\theta,x)\O\,U(\theta,x)\ket{0^{m}},    
    \end{align}
where $U$ is a unitary operation given by the expression

\begin{align}\label{uformula}
U(\theta,x)\coloneqq V_{d}(x)e^{-\frac{i\theta_{d}\mathcal{G}_{d}}{2}}V_{d-1}(x)\cdots V_{1}e^{-\frac{i\theta_{1}\mathcal{G}_{1}}{2}}V_{0}(x)    
\end{align}
where $V_{j}\in \mathcal{L}(\H)$, $j=0,\ldots,d$ are unitary operations, and  $\mathcal{G}_{i}$ hermitian operators such that $\norma{\mathcal{G}_{i}}\leq 1$ for all $i=1,\ldots,d$. Now, we consider the quantum neural network defined as a classical mixture of $N$ identical quantum experts with independent parameters.
The corresponding model function is given by \eqref{mixMoE} where $f$ as in \eqref{model1}, that is
\begin{equation}\label{eq:quantumMoE}
    F(\Theta,x)=\frac{1}{N}\sum_{i=1}^N \bra{0^{m}}U^{\dag}(\theta^i,x)\O\,U(\theta^i,x)\ket{0^{m}}.  
\end{equation}  

\noindent
Let us remark that \cite{girardi2024, hernandez2024} consider the training of quantum neural networks of the form
\begin{align}\label{girher}
    F(\Theta,x)=\frac{1}{N(M)}\sum_{i=1}^M\bra{0^{M}}\mathcal{U}^{\dag}(\Theta,x)\O_i\,\mathcal{U}(\Theta,x)\ket{0^{M}},    
    \end{align}
where $M$ is the number of qubits, $\mathcal{U}(\Theta,x)$ is a parametric $M$-qubit unitary operator, each $\O_i$ is  a  single-qubit observable and $N(M)$ is a suitable normalizing constant depending on the number of qubits $M$ chosen such that the covariance of $F(\Theta,x)$ at initialization has a finite nonzero limit for $M\to\infty$. They show that under suitable hypotheses, the displacement of each component of $\Theta_{t}$ (trained by gradient flow) from the corresponding initial value is bounded by a quantity that becomes arbitrarily small as $M\rightarrow +\infty$ uniformly in time. Such a regime is called lazy training. 
In the present work, the total number of qubits of the network is $Nm$. Therefore, the limit of infinite width $M\rightarrow +\infty$ considered in \cite{girardi2024,hernandez2024} is here replaced with the limit $N\to\infty$. In \cite{girardi2024, hernandez2024}, the variance at initialization of the model function is constant, while the function \eqref{eq:quantumMoE} is uniformly bounded in $N$ and its variance at initialization scales as $1/N$. Therefore, our quantum network is not in the lazy regime and can have effective representation learning. We stress that none of our results depends on the number of layers. Since we consider all the experts given by the same fixed quantum circuit, we do not consider the limit of infinite depth.

In what follows, we consider the case of a model function $f$ generated by a quantum circuit. Next, we then aim to show that under suitable assumptions on the quantum circuit, we can provide an explicit formula for the constant $\alpha$, and $\beta$ founded in Proposition \ref{prop:general}.
\begin{lem}\label{mainlem}
Let $f$ be defined according to \eqref{model1}. Then \eqref{model1} satisfies \autoref{asummpt1}, and the following holds true.
\begin{align}\label{gradflips}
&\norma{\nabla_{\theta}f(\theta,x)-\nabla_{\theta}f(\theta',x)}_{1}\leq d\norma{\theta-\theta'}_{1},\\\label{flips}
&\vert f(\theta,x)-f(\theta',x)\vert\leq \norma{\theta-\theta'}_{1},
\end{align}
so that 
\begin{align}
\alpha=\beta=1.
\end{align}
\end{lem}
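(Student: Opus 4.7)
The plan is to verify all three conditions of \autoref{asummpt1} directly from the product representation \eqref{uformula}, and then invoke \autoref{prop:general} with $\alpha=\beta=1$ to obtain \eqref{gradflips} and \eqref{flips}. Periodicity and continuity of $\theta\mapsto f(\theta,x)$ follow from the fact that $\theta$ enters \eqref{uformula} only through the smooth unitaries $e^{-i\theta_j\mathcal{G}_j/2}$. The bound $|f(\theta,x)|\le 1$ is immediate: since $U(\theta,x)\ket{0^m}$ is a unit vector and $\|\O\|\le 1$, Cauchy--Schwarz gives $|f(\theta,x)|=|\langle\psi(\theta)|\O|\psi(\theta)\rangle|\le\|\O\|\le 1$, where $\ket{\psi(\theta)}\coloneqq U(\theta,x)\ket{0^m}$.

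For \eqref{A1eq1}, the key step is the factorization $U(\theta,x)=A_k(\theta,x)\,e^{-i\theta_k\mathcal{G}_k/2}\,B_k(\theta,x)$, where $A_k=V_d e^{-i\theta_d\mathcal{G}_d/2}\cdots V_k$ and $B_k=V_{k-1}e^{-i\theta_{k-1}\mathcal{G}_{k-1}/2}\cdots V_0$ are unitary and do not depend on $\theta_k$. Since $\mathcal{G}_k$ commutes with $e^{-i\theta_k\mathcal{G}_k/2}$, one obtains
\begin{equation*}
\partial_{\theta_k}\ket{\psi(\theta)}=-\tfrac{i}{2}A_k\,\mathcal{G}_k\,e^{-i\theta_k\mathcal{G}_k/2}B_k\ket{0^m},\qquad \|\partial_{\theta_k}\ket{\psi}\|\le\tfrac{1}{2}\|\mathcal{G}_k\|\le\tfrac12.
\end{equation*}
Writing $\partial_{\theta_k}f=2\,\mathrm{Re}\,\bra{\psi}\O\,\partial_{\theta_k}\ket{\psi}$ and applying Cauchy--Schwarz with $\|\O\|\le 1$ and $\|\ket\psi\|=1$ yields $|\partial_{\theta_k}f(\theta,x)|\le \|\O\|\|\mathcal G_k\|\le 1$, so $\alpha=1$ works.

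For \eqref{A1eq2}, I would split into the diagonal case $j=k$ and the off-diagonal case $j\neq k$. When $j=k$, differentiating once more gives $\partial_{\theta_k}^2\ket{\psi}=-\tfrac{1}{4}A_k\mathcal{G}_k^2 e^{-i\theta_k\mathcal{G}_k/2}B_k\ket{0^m}$ with norm at most $\tfrac{1}{4}$; when $j>k$, further factoring $A_k=A'_{j,k}e^{-i\theta_j\mathcal{G}_j/2}A''_{j,k}$ produces
\begin{equation*}
\partial_{\theta_j}\partial_{\theta_k}\ket{\psi}=-\tfrac{1}{4}A'_{j,k}\mathcal{G}_j e^{-i\theta_j\mathcal{G}_j/2}A''_{j,k}\mathcal{G}_k e^{-i\theta_k\mathcal{G}_k/2}B_k\ket{0^m},
\end{equation*}
again of norm $\le\tfrac14$. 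Combining the product-rule identity
\begin{equation*}
\partial_{\theta_j}\partial_{\theta_k}f=2\,\mathrm{Re}\,\bra{\psi}\O\,\partial_{\theta_j}\partial_{\theta_k}\ket{\psi}+2\,\mathrm{Re}\,\bra{\partial_{\theta_j}\psi}\O\ket{\partial_{\theta_k}\psi}
\end{equation*}
with $\|\O\|\le 1$, $\|\partial_{\theta_i}\ket\psi\|\le\tfrac12$ and $\|\partial_{\theta_j}\partial_{\theta_k}\ket\psi\|\le\tfrac14$ gives $|\partial_{\theta_j}\partial_{\theta_k}f|\le 2\cdot\tfrac14+2\cdot\tfrac14=1$ in both cases, so $\beta=1$.

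Having verified \autoref{asummpt1} with $\alpha=\beta=1$, the Lipschitz bounds \eqref{gradflips}--\eqref{flips} follow at once from \autoref{prop:general}. The only place requiring any care is the bookkeeping of the commutator-type expansions of the second derivative and the observation that $\mathcal{G}_k$ commutes with its own exponential so that the diagonal case collapses to a clean $\mathcal{G}_k^2$ factor; everything else reduces to submultiplicativity of the operator norm together with the unitary invariance of the other factors in the circuit.
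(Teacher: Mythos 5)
Your proposal is correct and follows essentially the same route as the paper: differentiate the product of unitaries in \eqref{uformula}, observe that each derivative inserts a generator $\mathcal{G}_k$ with a factor $1/2$ (giving $\|\partial_{\theta_k}\psi\|\le 1/2$ and $\|\partial_{\theta_j}\partial_{\theta_k}\psi\|\le 1/4$), and bound everything by submultiplicativity of the operator norm together with $\|\O\|\le1$ and $\|\mathcal{G}_i\|\le1$. The only differences are organizational --- you phrase the computation in terms of $\ket{\psi(\theta)}$ and the identity $\partial_{\theta_j}\partial_{\theta_k}f=2\,\mathrm{Re}\bra{\psi}\O\ket{\partial_{\theta_j}\partial_{\theta_k}\psi}+2\,\mathrm{Re}\bra{\partial_{\theta_j}\psi}\O\ket{\partial_{\theta_k}\psi}$, and you invoke \autoref{prop:general} for the final Lipschitz bounds where the paper re-derives them inline --- but the substance and the resulting constants $\alpha=\beta=1$ are identical.
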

\begin{proof}
Let us first prove that for each $\theta=(\theta_{1},\ldots,\theta_{d})$, one gets that
\begin{align}\label{girher1}
 &\vert \partial_{\theta_{k}}f(\theta,x) \vert\leq 1,\\\label{girher2}
 &\vert \partial_{\theta_{j}}\partial_{\theta_{k}}f(\theta,x)\vert\leq 1
\end{align}
for all $j,k=1,\ldots,d$. Next, let us set $W_{j}(\theta_{j})\coloneqq e^{-\frac{i\theta_{j}\mathcal{G}_{j}}{2}}$ for all $j=1,\ldots, d$, and $V_{k}\coloneqq V_{k}(x)$, for all $k=0,\ldots, d$. By the definition of $U$ as in \eqref{uformula}, we get that
\begin{align}
\partial_{\theta_{k}}f(\theta,x)&=\bra{0^{m}}V_{0}^{\dag}(x)W_{1}^{\dag}(\theta_{1})V_{1}^{\dag}\cdots\left(\partial_{\theta_{k}}W_{k}(\theta_{k})\right)^{\dag}V_{k}^{\dag}\cdots W_{d}^{\dag}(\theta_{d})V_{d}^{\dag}\O V_{d}W_{d}(\theta_{d})\cdots V_{1}W_{1}(\theta_{1})V_{0}\ket{0^{m}}\\
&+\bra{0^{m}}V_{0}^{\dag}(x)W_{1}^{\dag}(\theta_{1})V_{1}^{\dag}\cdots W_{d}^{\dag}(\theta_{d})V_{d}^{\dag}\O V_{d}W_{d}(\theta_{d})\cdots V_{k}\left(\partial_{\theta_{k}}W_{k}(\theta_{k})\right) \cdots V_{1}W_{1}(\theta_{1})V_{0}\ket{0^{m}}\\
&=\frac{i}{2}\bra{0^{m}}V_{0}^{\dag}(x)W_{1}^{\dag}(\theta_{1})V_{1}^{\dag}\cdots \mathcal{G}_{k}W_{k}^{\dag}(\theta_{k})V_{k}^{\dag}\cdots W_{d}^{\dag}(\theta_{d})V_{d}^{\dag}\O V_{d}W_{d}(\theta_{d})\cdots V_{1}W_{1}(\theta_{1})V_{0}\ket{0^{m}}\\
&-\frac{i}{2}\bra{0^{m}}V_{0}^{\dag}(x)W_{1}^{\dag}(\theta_{1})V_{1}^{\dag}\cdots W_{d}^{\dag}(\theta_{d})V_{d}^{\dag}\O V_{d}W_{d}(\theta_{d})\cdots V_{k}W_{k}(\theta_{k})\mathcal{G}_{k}\cdots V_{1}W_{1}(\theta_{1})V_{0}\ket{0^{m}}.
\end{align}
Since $\norma{V_{k}}\leq 1$, and $\norma{W_{j}}\leq 1$, we have that

\begin{align}
\vert\partial_{\theta_{k}}f(\theta,x)\vert\leq \norma{\mathcal{G}_{k}}\norma{\O}\leq 1. 
\end{align}
Let us now compute $\partial_{\theta_{j}}\partial_{\theta_{k}}f(\theta,x)$. Notice that

\begin{align}
\partial_{\theta_{j}}\partial_{\theta_{k}}f(\theta,x)&= \frac{i^{2}}{4}\bra{0^{m}}V_{0}^{\dag}(x)W_{1}^{\dag}(\theta_{1})V_{1}^{\dag}\cdots \mathcal{G}_{j}W_{j}^{\dag}(\theta_{j})V_{j}^{\dag}\cdots \\
&\cdots \mathcal{G}_{k}W_{k}^{\dag}(\theta_{k})V_{k}^{\dag}\cdots W_{d}^{\dag}(\theta_{d})V_{d}^{\dag}\O V_{d}W_{d}(\theta_{d})\cdots V_{1}W_{1}(\theta_{1})V_{0}\ket{0^{m}}\\
&+\frac{i(-i)}{4}\bra{0^{m}}V_{0}^{\dag}(x)W_{1}^{\dag}(\theta_{1})V_{1}^{\dag}\cdots\mathcal{G}_{k}W_{k}^{\dag}(\theta_{k})V_{k}^{\dag}\cdots W_{d}^{\dag}(\theta_{d})V_{d}^{\dag}\O V_{d}W_{d}(\theta_{d})\cdots\\
&\cdots V_{j}W_{j}(\theta_{j})\mathcal{G}_{j}\cdots V_{1}W_{1}(\theta_{1})V_{0}\ket{0^{m}}\\
&+ h.c..
\end{align}
Since $\norma{V_{k}}\leq 1$, and $\norma{W_{j}}\leq 1$, we then get that

\begin{align}
\vert \partial_{\theta_{j}}\partial_{\theta_{k}}f(\theta,x)\vert \leq \norma{\mathcal{G}_{j}}\norma{\mathcal{G}_{k}}\norma{\O}\leq 1. 
\end{align}
Furthermore, notice that by construction $f$ is analytic, and satisfies $\vert f(\theta,x)\vert\leq 1$ for all $\theta,x$. Let us set  

\begin{align}
   dh_{x}(\theta):(\T^{d},\ell^{1})\rightarrow (\T^{d},\ell^{1}),\hskip 0,2cm v\mapsto \sum_{i=1}^{d}\partial_{\theta_{i}}\nabla_{\theta}f(\theta,x)v_{i}. 
\end{align}
Notice that

\begin{align}
    \norma{dh_{x}\theta}_{\ell^{1}\rightarrow \ell^{1}}&=\sup_{\norma{v}_{1}\leq 1}\norma{dh_{x}(\theta)v}_{1}\\
    &=\sup_{\norma{v}_{1}\leq 1}\norma{\sum_{i=1}^{d}\partial_{\theta_{i}}\nabla_{\theta}f(\theta,x)v_{i}}_{1}\\
    &=\sup_{\norma{v}_{1}\leq 1}\sum_{j=1}^{d}\sum_{i=1}^{d}\vert \partial_{\theta_{i}}\partial_{\theta_{j}}f(\theta,x)v_{i}\vert\\
    &\leq \sup_{\norma{v}_{1}\leq 1}\sum_{j=1}^{d}\sum_{i=1}^{d}\vert v_{i}\vert
\end{align}
where in the last inequality we have used \eqref{girher2}. Therefore
\begin{align}
\norma{dh_{x}\theta}_{\ell^{1}\rightarrow \ell^{1}}\leq d.
    \end{align}
Hence,

\begin{align}
\norma{\nabla_{\theta}f(\theta,x)-\nabla_{\theta}f(\theta',x)}_{1}&\leq \norma{dh_{x}(\theta)}_{\ell^{1}\rightarrow \ell^{1}}\norma{\theta-\theta'}_{1}\\
&\leq d\norma{\theta-\theta'}_{1}.
\end{align}
On the other hand, we have 
\begin{align}
\vert f(\theta,x)-f(\theta',x) \vert\leq \max_{\theta\in \mathbb{T}^{d}}\norma{df(\theta,x)}_{{\rm op}}\norma{\theta-\theta'}_{1}.
\end{align} 
Let us estimate $\norma{df(\theta,x)}_{{\rm op}}$. By \eqref{girher1}

\begin{align}
\norma{df(\theta,x)}_{{\rm op}}&=\sup_{\norma{v}_{1}\leq 1}\vert \nabla f(\theta,x)\cdot v\vert\\
&=\sup_{\norma{v}_{1}\leq 1}\sum_{i=1}^{d}\vert \partial_{\theta_{i}}f(\theta,x)v_{i}\vert\\
&=\sup_{\norma{v}_{1}\leq 1}\sum_{i=1}^{d} \vert v_{i}\vert\\
&\leq 1.
\end{align}
Therefore,

\begin{align}
 \vert f(\theta,x)-f(\theta',x) \vert\leq \norma{\theta-\theta'}_{1}.
\end{align}
\end{proof}

\begin{thm}[mean-field limit of quantum neural networks]\label{mfieldq}
Let us fix $T>0$, and $d>4$. Let us consider the system \eqref{gradeq5} with initial conditions $\Theta_{0}^{N}$ composed by independent and identically distributed random variables with values in $\mathbb{T}^{d}$. Furthermore, let $f$ be  generated by a quantum circuit according to \eqref{model1}. Then the same assertions as in \autoref{ourmainthm} hold true.
\end{thm}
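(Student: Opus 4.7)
The plan is to observe that Theorem \ref{mfieldq} is essentially a direct application of the general result in Theorem \ref{ourmainthm} to the concrete case of quantum experts, once the regularity of the quantum model function is verified. The main point is that Theorem \ref{ourmainthm} makes no assumption on the specific form of $f$ beyond Assumption \ref{asummpt1}, so the whole argument reduces to checking that the quantum model function \eqref{model1} falls into that class.

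First, I would invoke Lemma \ref{mainlem} to verify Assumption \ref{asummpt1} for the quantum model function $f(\theta,x) = \bra{0^m} U^\dagger(\theta,x)\,\O\,U(\theta,x)\ket{0^m}$. Specifically, Lemma \ref{mainlem} shows that $\vert f(\cdot,x)\vert\le 1$, that $\vert\partial_{\theta_k}f(\theta,x)\vert\le 1$, and that $\vert\partial_{\theta_j}\partial_{\theta_k}f(\theta,x)\vert\le 1$. Since $U(\theta,x)$ is a product of unitaries of the form $e^{-i\theta_j\mathcal{G}_j/2}$ and $x$-dependent unitaries $V_k(x)$, the function $f(\cdot,x)$ is smooth and $2\pi$-periodic in each component of $\theta$. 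Hence Assumption \ref{asummpt1} holds with $\alpha=\beta=1$.

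Next, I would apply Theorem \ref{ourmainthm}: since the quantum $f$ satisfies Assumption \ref{asummpt1}, the drift $b(\theta,\mu)$ defined in \eqref{ourfield} is globally Lipschitz by Theorem \ref{thm:fieldlip} with constant $C=\max\{dn(A+1),\,dn\}=dn(A+1)$, where $A=\max_{1\le j\le n}\vert y_j\vert$. Consequently, the three conclusions (I)--(III) of Theorem \ref{ourmainthm} follow verbatim: the particle system \eqref{gradeq5} driven by the quantum gradients has a unique strong solution by Corollary \ref{wellpodsyst}; propagation of chaos in the sense of Definition \ref{propchaos} with $p=2$ holds for a coupled family of independent McKean processes $\overline{\Theta}_t^N$; and the empirical measure $\mu_{\Theta_t^N}$ converges weakly to the unique solution $\mu_t$ of the continuity equation \eqref{FKP1}, with the quantitative bound $\mathbb{E}\,\W_2^2(\mu_{\Theta_t^N},\mu_t)\le C_1\alpha_d(N)$ guaranteed by Lemma \ref{convergencelem}.

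Since no new technical step is required, the only mild subtlety is to state the result honestly: the Lipschitz constant of $b$, and therefore the constant $C_1(b,T)$ in the convergence rate, depends on the structural parameters of the quantum circuit only through the dimension $d$ of the parameter space, the training set size $n$, and the label bound $A$, but \emph{not} on the number of qubits $m$ per expert nor on the circuit depth. This is precisely the point emphasized after \eqref{girher}: the mean-field convergence rate $\alpha_d(N)=N^{-2/d}+N^{-1/2}$ is an $N$-dependent statement about the number of experts, while $m$ remains fixed. Therefore, the proof is complete by assembling Lemma \ref{mainlem} and Theorem \ref{ourmainthm}, and no genuine obstacle arises beyond the bookkeeping.
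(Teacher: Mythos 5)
Your proposal is correct and follows essentially the same route as the paper: invoke Lemma \ref{mainlem} to verify Assumption \ref{asummpt1} with $\alpha=\beta=1$, note that Theorem \ref{thm:fieldlip} then gives the Lipschitz constant $C=dn(A+1)$, and conclude by applying Theorem \ref{ourmainthm} verbatim. No discrepancy with the paper's argument.
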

\begin{proof}
Let us notice that by \autoref{mainlem}, we are in position to apply \autoref{wellpodsyst}, and \autoref{thmpropchaos}. Furthermore, since $\alpha=\beta=1$, the involved Lipschitz constant in \autoref{thm:fieldlip} is then
\begin{align}
C=dn(A+1).
\end{align}
So that our conclusion follows by applying the same reasoning as in \autoref{ourmainthm}. 
\end{proof}

\section{Conclusions}\label{sec:concl}
In this paper we have studied the training of mixtures of identical experts with particular focus on the case where each expert is given by a quantum neural network. We have applied the mean-field limit and the propagation of chaos to model the time evolution of the model parameters in the training by gradient flow. Within such a framework, the parameters of a single expert are considered as the spatial coordinates of a particle, and the overall training is described by the dynamics of a system of particles induced by a vector field in the sense of \eqref{gradeq4} and \eqref{ourfield}. As a consequence, a general equation of continuity can be defined. In \autoref{ourmainthm}, we show that the empirical measure $\mu_{\Theta_t^N}$ associated to the trained parameters, defined in \eqref{eq:empiricalmeasure}, weakly converges to the solution $\mu_t$ of the continuity equation \eqref{FKP1}. More precisely, we provide a quantitative bound over the Wasserstein of order 2 between $\mu_{\Theta_t^N}$ and $\mu_t$ which tends to zero as the number of experts in the mixture goes to infinity. Then, we have focused on the case where each expert in the mixture is a parametric quantum circuit, specializing the convergence result (\autoref{mfieldq}). Differently from \cite{girardi2024,hernandez2024}, the training of the quantum neural network considered here does not happen in the lazy regime, enabling representation learning. Our techniques do not allow us to study the joint limit of infinite depth and width.

Our result open the way to several possible research directions:
\begin{itemize}
    
    \item  Finding a better rate of convergence for the sequence of empirical measures generated by the vector $\Theta_{t}^{N}$, where the rate of convergence is polynomial in the number of parameters of each expert rather than exponential, as found in the present work.
    
    \item  Determining time-uniform upper bounds for the Wasserstein distance of order $2$ between the empirical distribution of the parameters and the limit probability measure. Such bounds would prove that the mean-field approximation holds even for $t\to\infty$, \emph{i.e.}, at the end of the training when the generated function reproduces perfectly the training examples.

    \item Extending our results to the setting where the number of parameters of each expert grows with $N$. This setting would allow us to consider the joint limit of infinite depth and width, and to scale the complexity of each expert with $N$. In this setting, the convergence of the empirical distribution of the parameters is ill-defined, and the probability distribution of the generated function would have to be considered.
\end{itemize}

\section*{Acknowledgements}
GDP has been supported by the HPC Italian National Centre for HPC, Big Data and Quantum Computing -- Proposal code CN00000013 -- CUP J33C22001170001 and by the Italian Extended Partnership PE01 -- FAIR Future Artificial Intelligence Research -- Proposal code PE00000013 -- CUP J33C22002830006 under the MUR National Recovery and Resilience Plan funded by the European Union -- NextGenerationEU.
Funded by the European Union -- NextGenerationEU under the National Recovery and Resilience Plan (PNRR) -- Mission 4 Education and research -- Component 2 From research to business -- Investment 1.1 Notice Prin 2022 -- DD N. 104 del 2/2/2022, from title ``understanding the LEarning process of QUantum Neural networks (LeQun)'', proposal code 2022WHZ5XH -- CUP J53D23003890006.
DP has been supported by project SERICS (PE00000014) under the MUR National Recovery and Resilience Plan funded by the European Union -- NextGenerationEU.
GDP and DP are members of the ``Gruppo Nazionale per la Fisica Matematica (GNFM)'' of the ``Istituto Nazionale di Alta Matematica ``Francesco Severi'' (INdAM)''. AMH has been supported by project PRIN 2022 
``understanding the LEarning process of QUantum Neural networks (LeQun)'', proposal code 2022WHZ5XH -- CUP J53D23003890006. The author AMH is a member of the ``Gruppo Nazionale per l'Analisi Matematica, la Probabilità e le loro Applicazioni (GNAMPA)'' of the ``Istituto Nazionale di Alta Matematica ``Francesco Severi'' (INdAM)''.

\bibliographystyle{siam}
\bibliography{bibliography_1}
 \end{document}